\documentclass[aps,pra,twocolumn]{revtex4-2}
\usepackage{amsmath,amsfonts,amssymb,amsthm}
\usepackage{mathtools}
\usepackage[pdftex]{graphicx}
\usepackage[usenames, dvipsnames, svgnames, table]{xcolor}
\usepackage[colorlinks=true, citecolor=Blue, linkcolor=BrickRed, urlcolor=Brown]{hyperref}
\usepackage{txfonts}
\usepackage{mathrsfs}
\usepackage{bm}
\usepackage{multirow}
\usepackage{braket}
\usepackage{import}
\usepackage{qcircuit}
\usepackage[ruled, noline, noend]{algorithm2e}
\usepackage{array}
\usepackage{comment}
\usepackage{tikz}
\usepackage{tikz-cd}
\tikzcdset{scale cd/.style={every label/.append style={scale=0.8 * #1},
    cells={nodes={scale=0.9 * #1}}}}


\newtheorem{theorem}{Theorem}
\newtheorem{lemma}{Lemma}

\theoremstyle{definition}
\newtheorem{definition}{Def.}
\newtheorem*{definitionRestate}{Def}
\newcommand{\be}{\begin{equation}}
\newcommand{\ee}{\end{equation}}
\newcommand{\ben}{\begin{eqnarray}}
\newcommand{\een}{\end{eqnarray}}
\newcommand{\bes}{\begin{subequations}}
\newcommand{\ees}{\end{subequations}}
\newcommand{\bF}{\begin{figure}}
\newcommand{\eF}{\end{figure}}

\DeclareMathAlphabet{\pazocal}{OMS}{zplm}{m}{n}

\newcommand{\orcid}[1]{\href{https://orcid.org/#1}{\includegraphics[height = 2ex]{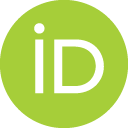}}}

\begin{document}

\title{Phase Transitions in Decision Problems Over Odd-Sized Alphabets}

\author{Andrew Jackson \orcid{0000-0002-5981-1604}}
\affiliation{School of Informatics, University of Edinburgh, Edinburgh, EH8 9AB, United Kingdom}
\date{\today}


\begin{abstract}
     In [A. Jackson, Explaining the ubiquity of phase transitions in decision problems (2025), arXiv:2501.14569], I established that phase transitions are always present in a large subset of decision problems over even-sized alphabets, explaining -- in part -- why phase transitions are seen so often in decision problems. However, decision problems over odd-sized alphabets were not discussed. Here, I correct that oversight, showing that a similar subset of decision problems over odd-sized alphabets also always exhibit phase transitions.
\end{abstract}

\maketitle
\section{Introduction}
\label{IntroSec}
\subsection{Background and Decision Problems}
Decision problems (as defined in Def.~\ref{def:decisionProb}) and the analysis of their behaviors are the cornerstones of theoretical computer science, and form the basis for our most foundational notions of computing, such as: Turing machines~\cite{Arora_Barak_2009_11-23} and computational complexity theory~\cite{sep-computational-complexity}.

In fact, for most, if not all, observed phenomena in computing, the most central, simple, and important aspect of it can be found through an examination of decision problems. The insights gained from the core of the phenomena, found in   decision problems, often cascade back up to the more complex instances of that phenomena, found in more general situations. In this way, barriers to investigating important but inscrutable phenomena can potentially be circumvented.  

This has been a fruitful approach to investigating important questions in computer science e.g. the question of whether quantum computers~\cite{NQCC_2025} provide any advantage over their classical counterparts is often formalized in terms of comparing BQP~\cite{10.1137/S0097539796300921, aaronson2009bqp} (the efficiently solvable -- defined as in Ref.~\cite{edmonds1965} and Ref.~\cite{Cobham1965} -- decision problems on a quantum computer) and P~\cite{sipser13P} (the efficiently solvable decision problems on a classical computer). Similarly, a vast array of important questions in computer science have been reduced to comparing NP~\cite{sipser13NP} (the efficiently solvable decision problems on a non-deterministic~\cite{sipser13NonDet} classical computer) and P.

In this paper, decision problems will, without loss of generality, be assumed to consist of deciding if a given word is in a specific language. The preceding sentence is given formal meaning using the below Def.~\ref{def:decisionProb}, Def.~\ref{def:alphabet}, Def.~\ref{def:SigmaStar}, and Def.~\ref{def:wordAndLanguage}.
\begin{definition}
    \label{def:decisionProb}
    A \underline{decision problem} is any problem where there are only two possible answers. Typically, ACCEPT and REJECT.
\end{definition}
\begin{definition}
    \label{def:alphabet}
    An \underline{alphabet} is a finite set of symbols e.g. $\{a, b, c, d, e, f, g\}$. Herein, I will assume all alphabets considered have at least two elements.
\end{definition}
\begin{definition}
    \label{def:SigmaStar}
    For any alphabet, $\Sigma$, define \underline{$\Sigma^*$} as the set of all finite strings of symbols from $\Sigma$.
    \end{definition}
    \begin{definition}
    \label{def:wordAndLanguage}
    A \underline{word} over the alphabet $\Sigma$ is an element of $\Sigma^*$ and any subset of $\Sigma^*$ is referred to as a \underline{language over $\Sigma$}.
\end{definition}

One phenomena that appears exceedingly often in natural decision problems is phase transitions~\cite{minesweeperPhase, https://doi.org/10.1002/(SICI)1098-2418(1999010)14:1<63::AID-RSA3>3.0.CO;2-7, Gent1994TheSP, stein_newman_2013, doi:10.1080/0022250X.1982.9989929}.
Most generally, phase transitions in decision problems are defined informally as a rapid change in the probability of being in a specific language as a specific polynomial-time function -- defined on the Kleene star operation of the relevant alphabet -- changes. This rapid change in one aspect of the problem across a relatively small change in another aspect is reminiscent of phase transitions in physical systems and so it is hoped that the most interesting and critical aspects of phase transitions in many-body / condensed-matter systems are captured by phase transitions in decision problems.

An example of how these phase transitions appear is given in Fig.~\ref{fig:ExampleImage}.
    \begin{figure}[h!]
    \centering
\includegraphics[width=0.49\textwidth] {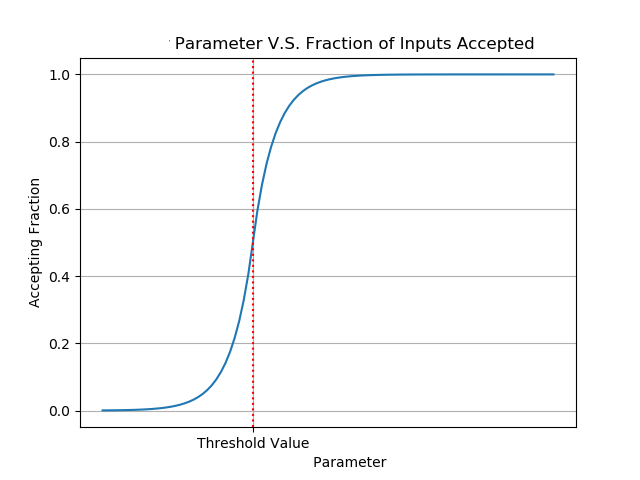} 
    \caption{[From Ref.~\cite{jackson2025explainingubiquityphasetransitions}] A typical example of a phase transition. The defining feature of a phase transition -- in decision problems -- is the change in the accepting fraction as the parameter -- a real-valued polynomial time computable function of an instance of the problem -- approaches the threshold value (the vertical red line).}
    \label{fig:ExampleImage}
\end{figure}
Figures similar to Fig.~\ref{fig:ExampleImage} can be found in Ref.~\cite{minesweeperPhase},  Ref.~\cite{Gent1994TheSP}, Ref.~\cite{101371journalpone0215309}, Ref.~\cite{cross3Sat}, and Ref.~\cite{BAILEY20071627}; further showing how common phase transitions are.

However, for mathematical investigations of phase transitions, mere examples of phase transitions will not suffice, and a formal definition is required. In Ref.~\cite{jackson2025explainingubiquityphasetransitions}, I provided one such definition of phase transitions. But before I can recite it, I must specify the concepts phase transitions are defined in terms of. 

\subsection{Phase Transition Preliminary Definitions}
\label{Def.sPrelimSubsection}
The most basic and definitional aspect of a phase transition is a change in the properties of a system. In decision problems, there are considerably fewer obviously meaningful properties that may significantly change than in many-body systems (e.g. in condensed matter systems, phase transitions can manifest as rapid changes in magnetic ordering~\cite{hook1991solidMagnetic, Buschow2003}, crystal structure~\cite{hook1991solid}, or electrical conductivity~\cite{hook1991solidSuper}), where phase transitions are more famous and widely studied. For decision problems, the quantity that changes in the course of a phase transition is the accepting fraction, defined in Def.~\ref{def:acceptingFraction}.
\begin{definition}
\label{def:acceptingFraction}
    For any alphabet, $\Sigma$, and subset, $\mathcal{S} \subseteq \Sigma^*$, the \underline{accepting fraction}, $\mathcal{A}[\mathcal{S}]$, relative to a language $\mathcal{L} \subseteq \Sigma^*$, is the fraction of $\mathcal{S}$ that is in $\mathcal{L}$.
    Mathematically, $\mathcal{A}[\mathcal{S}]$ may be expressed as: $\forall \mathcal{S} \subseteq \Sigma^*$,
    \begin{align}
        \label{eqn:indef:acceptingFraction}
        \mathcal{A}[\mathcal{S}] 
        &=
        \dfrac{\big \vert \mathcal{S} \cap \mathcal{L} \big \vert}{\big \vert \mathcal{S} \big \vert}.
    \end{align}
\end{definition}
With the quantity that changes dependently as a phase transition occurs defined, in Def.~\ref{def:acceptingFraction}, I then turn to ask about the changing quantity that induces the phase transition. In many-body or condensed-matter physics, this is most stereotypically the temperature (e.g. a rise in temperature melting ice). But decision problems do not have a temperature. Instead the changing -- and change inducing -- quantity in decision problem phase transitions is a parameter (a polynomial-time computable function, as defined in Def.~\ref{orderParameterDef}) that is specific to the language that the phase transition is observed in.
\begin{definition}
    \label{orderParameterDef}
    A \underline{parameter} is any polynomial time computable mapping from $\Sigma^*$ to $\mathbb{R}$, where $\Sigma$ may be any alphabet. 
\end{definition}
To be clear, the quantity that induces the phase transition is the value -- from $\mathbb{R}$ -- of the parameter; it plays the role corresponding to the temperature in the above mentioned example of melting ice.

In order for the accepting fraction -- that changes during the phase transition -- and the parameter -- that causes the phase transition by changing -- to be more easily related, I specify some notation, in Def.~\ref{def:parameterSlice}, that will be used throughout this paper.
\begin{definition}
    \label{def:parameterSlice}
    For any parameter, $\gamma: \Sigma^* \longrightarrow \mathbb{R}$, the \underline{parameter slice}, $\mathcal{S}^{\gamma}_{n} \subseteq \Sigma^*$, is defined by: $\forall n \in \mathbb{R}$,
    \begin{align}
        \mathcal{S}^{\gamma}_n 
        =
        \big\{ x \in \Sigma^* \text{ } \vert \text{ } \gamma (x) = n \big\}.
    \end{align}
    I.e. $\mathcal{S}^{\gamma}_{n}$ is the set of all $x \in \Sigma^*$ such that the parameter, $\gamma$, takes the value $n \in \mathbb{R}$.
\end{definition}
\subsection{Decision Problem Phase Transition Formal Definition}
I can now formally define phase transitions (in Def.~\ref{PhaseDef}), aiming to capture a slightly more broad notion of phase transitions than the one observed in Fig.~\ref{fig:ExampleImage}. 
\begin{definition}[From Ref.~\cite{jackson2025explainingubiquityphasetransitions}]
\label{PhaseDef}
    A language, $\mathcal{L} \subseteq \Sigma^*$, exhibits a \underline{phase transition} if and only if there exists a parameter, $\gamma: \Sigma^* \longrightarrow \mathbb{R}$, such that:
    \begin{enumerate}
        \item As $n \longrightarrow \infty$, $\mathcal{A}[\mathcal{S}^{\gamma}_n]\longrightarrow 1$, with a monotonically increasing lower bound.
    \item As $n \longrightarrow -\infty$, $\mathcal{A}[\mathcal{S}^{\gamma}_n]\longrightarrow 0$, with a monotonically decreasing upper bound.   
    \item  The fraction of $\Sigma^*$ that takes a value of $\gamma$ between A $\in \mathbb{R}^+$ and A + $\delta$ (where $\delta \in \mathbb{R}^+$) grows exponentially~\footnote{or at least does beyond a short distance from the threshold value} as $\vert$ A - $\mathcal{T}$ $\vert$ increases (where $\mathcal{T} \in \mathbb{R}$ is a specific value).
    \end{enumerate}
\end{definition}
It may be useful to note that the above definition of phase transitions does \emph{not} require the change in acceptance fraction happen while the parameter transitions through an interval (of the real numbers) of a specific size; this is as, if such a requirement were added, any language meeting the above Def.~\ref{PhaseDef} can have its parameter scaled by some constant real value to meet the new definition. So adding a requirement on how quickly the transition must happen is redundant.  
\subsection{P-Isomorphism Essentials}
\label{sec:PisomorpStuff}
Before going any further, I pause to present a core component of most arguments in this paper: P-isomorphisms. In Sec.~\ref{sec:PisomorpStuff}, I also present important related concepts. The first definition is Def.~\ref{def:PIsomorph}.
\begin{definition}
    \label{def:PIsomorph}
    For any alphabets, $\Sigma$ and $\Pi$, a \underline{P-isomorphism} is any polynomial-time, bijective mapping from $\Sigma^*$ to $\Pi^*$ that can also be inverted (i.e. the input can be retrieved from the corresponding output) in polynomial time.
\end{definition}
With P-isomorphisms defined, I can then define another important concept, in Def.~\ref{pIsoDef}.
\begin{definition}
    \label{pIsoDef}
    For any pair of alphabets, $\Sigma$ and $\Pi$, and any P-isomorphism, $\xi: \Sigma^* \longrightarrow \Pi^*$, the \underline{P-isomorphism output size} of $\xi$, denoted $N_{\xi}: \Sigma^* \longrightarrow \mathbb{N}_0$, is defined by: $\forall x \in \Sigma^*$,
    \begin{align}
        N_{\xi}(x)
        &=
        \big \vert \xi(x) \big \vert,
    \end{align}
    where $\big \vert \cdot \big \vert : \Pi^* \longrightarrow \mathbb{N}_0$ is the length of its argument word~\footnote{In fact, throughout this paper I will regularly use this function and define it polymorphically to return the length of a word regardless of the alphabet the word is formed from.}.
\end{definition}
A slight enhancement of P-isomorphisms are preserving-P-isomorphisms, defined in Def.~\ref{def:preservingPIso}.
\begin{definition}
    \label{def:preservingPIso}
    For any alphabets, $\Sigma$ and $\Pi$, a \underline{preserving-P-isomorphism}, $\xi: \Sigma^* \longrightarrow \Pi^*$, from $\mathcal{L} \subseteq \Sigma^*$ to $\mathcal{H} \subseteq \Pi^*$ is a P-isomorphism from $\Sigma^*$ to $\Pi^*$ such that: $\forall x \in \Sigma^*$,
    \begin{align}
        \label{eqn:PreservingAspect}
        x \in \mathcal{L} \iff \xi(x) \in \mathcal{H}.
    \end{align}
\end{definition}
The situation that Eqn.~\ref{eqn:PreservingAspect} specifies is depicted in Fig.~\ref{fig:Lemma3Diagram}.

The final definition of Sec.~\ref{sec:PisomorpStuff} is a further development of the chain of developments above. Defining a set based on having the same P-isomorphism output size: Def.~\ref{def:mathcalBnphi}. 
\begin{definition}
    \label{def:mathcalBnphi}
    For any alphabets, $\Sigma$ and $\Pi$, and P-isomorphism, $\phi: \Sigma^* \longrightarrow \Pi^*$, the set
    \underline{$\mathcal{B}_n^{\phi}$} $\subseteq \Sigma^*$ is defined by:
   \begin{align} 
        \mathcal{B}_n^{\phi}~=~\big \{ x \in \Sigma^* \text{ } \vert \text{ } N_{\phi}(x) = n \big\},
    \end{align}
    where $N_{\phi}: \Sigma^* \longrightarrow \mathbb{N}$ is the P-isomorphism output length of $\phi$, as defined in Def.~\ref{pIsoDef}.
\end{definition}

Def.~\ref{def:parameterSlice} and Def.~\ref{def:mathcalBnphi} may be combined to express $\mathcal{B}_n^{\phi}$ as: for any pair of alphabets, $\Sigma$ and $\Pi$, and  P-isomorphism, $\phi: \Sigma^* \longrightarrow \Pi^*$,
\begin{align} 
        \mathcal{B}_n^{\phi}
        =
        \big \{ x \in \Sigma^* \text{ } \vert \text{ } N_{\phi}(x) = n \big\}
        =
        \mathcal{S}^{N_{\phi}}_n
        \subseteq 
        \Sigma^*.
    \end{align}

\subsection{Pre-existing Results on Paddability, Not-Anywhere-Exponentially-unbalanced languages, and Phase Transitions}
Like in Ref.~\cite{jackson2025explainingubiquityphasetransitions}, a key pre-existing result is the definition of a complexity class known as RoughP and the fact that all paddable languages (defined in Def.~\ref{padDef}) are in it. RoughP is defined as in Def.~\ref{def:RoughP}.
\begin{definition}{(Def.~3 in Ref.~\cite{farago2016roughly})}
    \label{def:RoughP}
    Let,
    \begin{enumerate}
        \item $\Sigma$ be an alphabet with $\vert \Sigma \vert \geq 2$
    \item $\mathcal{L} \subseteq \Sigma^*$
be a language.
    \end{enumerate}
Then, $\mathcal{L} \in$ \underline{RoughP}, if and only if there exists a P-isomorphism, $\phi: \Sigma^* \longrightarrow \Sigma^*$, and a polynomial time algorithm,
$\mathcal{P}$ : $\Sigma^* \longrightarrow$ \{Accept, Reject, $\perp$\}, such that:
\begin{enumerate}
    \item $\mathcal{P}$ correctly decides $\mathcal{L}$, as an errorless heuristic. That is, it never outputs a wrong decision:
if $\mathcal{P}$ accepts a string $x \in \Sigma^*$, then $x$  $\in \mathcal{L}$ always holds, and if $\mathcal{P}$ rejects $x \in \Sigma^*$, then $x \not \in \mathcal{L}$ always holds.
\item Besides Accept/Reject, $\mathcal{P}$ may output another symbol, $\perp$, meaning it is unable to decide if the input is in the language.
This can occur, however, only for at most an exponentially small fraction of strings. I.e. there is a constant $c \in [0,1)$ such that: $\forall n \in \mathbb{N}$,
\begin{align}
    \label{eqn:RoughPDefiningEquation}
    \dfrac{\vert \mathcal{B}_n^{\phi} \cap \{ x \in \Sigma^* \text{ }\vert\text{ } \mathcal{P}(x) = \perp\} \vert}{\vert \mathcal{B}_n^{\phi} \vert} \leq c^n,
\end{align}
where \underline{$\mathcal{B}_n^{\phi}$} is as in Def.~\ref{def:mathcalBnphi}.
\end{enumerate}
\end{definition}

\emph{The} key relevant result from previous work of interest for my purposes herein is by the current author and found in Ref.~\cite{jackson2025explainingubiquityphasetransitions}: Theorem~\ref{mainResultTheoremOFLASTPAPER}.
\begin{theorem}[Theorem 2 in Ref.~\cite{jackson2025explainingubiquityphasetransitions}]
\label{mainResultTheoremOFLASTPAPER}
    Any paddable not-anywhere-exponentially-unbalanced language over an even-sized alphabet exhibits a phase transition.
\end{theorem}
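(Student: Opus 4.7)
The plan is to construct an explicit parameter $\gamma$ from the structure guaranteed by the hypotheses, and then to verify the three items of Def.~\ref{PhaseDef} against it. Paddability is what supplies entry into RoughP (via Def.~\ref{def:RoughP}), so I can extract a P-isomorphism $\phi: \Sigma^* \longrightarrow \Sigma^*$ together with an errorless polynomial-time heuristic $\mathcal{P}$ whose $\perp$-fraction on each $\mathcal{B}^{\phi}_n$ decays like $c^n$. The not-anywhere-exponentially-unbalanced hypothesis will then be used to rule out the slice-distribution pathologies that would otherwise destroy condition~3 of Def.~\ref{PhaseDef}.

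The parameter $\gamma$ I would build assigns a \emph{signed} length to each string: informally, $\gamma(x) = +N_{\phi}(x)$ when $\mathcal{P}(x) = \text{Accept}$, $\gamma(x) = -N_{\phi}(x)$ when $\mathcal{P}(x) = \text{Reject}$, and a neutral value (e.g.\ $0$) when $\mathcal{P}(x) = \perp$. Since $\phi$ and $\mathcal{P}$ both run in polynomial time, and $N_{\phi}$ is polynomial-time computable from $\phi$, this $\gamma$ satisfies Def.~\ref{orderParameterDef}.

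With this $\gamma$, items~1 and~2 of Def.~\ref{PhaseDef} become nearly immediate: for large positive $n$ the slice $\mathcal{S}^{\gamma}_n$ is, by construction, contained in the Accept-set of $\mathcal{P}$, and errorlessness forces every such string into $\mathcal{L}$, so $\mathcal{A}[\mathcal{S}^{\gamma}_n] = 1$ identically for $n$ large enough; a symmetric argument covers $n \longrightarrow -\infty$, giving the required monotone envelope. Item~3 is where the real work concentrates: one must argue that $\big|\mathcal{B}^{\phi}_n\big|$ grows exponentially in $n$ (which follows from $|\Sigma| \geq 2$) and that the fraction of $\mathcal{B}^{\phi}_n$ mapped by $\gamma$ to $\pm n$ is not itself exponentially small. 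The bound of Eqn.~\ref{eqn:RoughPDefiningEquation} controls only the $\perp$-outputs, so the genuine content of the not-anywhere-exponentially-unbalanced hypothesis is to bound the within-slice densities of Accept- and Reject-labelled strings simultaneously away from $0$; without it, one of the two ``tails'' of $\gamma$ could collapse.

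The main obstacle I expect is the interaction between the even-sized alphabet hypothesis and the parameter construction. Signing $N_{\phi}(x)$ by the $\mathcal{P}$-label is the natural choice when $\Sigma$ can be split into two balanced halves, but ensuring that the signed slices remain exponentially populated — uniformly on both sides of the threshold — is where the parity of $|\Sigma|$ seems to matter, since a polynomial-time encoding of a sign bit together with a length is most cleanly arranged by a balanced halving of the alphabet. I expect this is exactly the step that will need to be redone in the companion (odd-sized) development: when $|\Sigma|$ is odd, one cannot symmetrically partition $\Sigma$, so one must either introduce a ``midpoint'' symbol whose strings contribute a controllable boundary term, or rebalance the parameter to compensate for the missing half-symbol, before the rest of the above argument goes through verbatim.
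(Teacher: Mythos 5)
First, an important caveat: Theorem~\ref{mainResultTheoremOFLASTPAPER} is not proved in this paper at all --- it is imported verbatim as Theorem~2 of Ref.~\cite{jackson2025explainingubiquityphasetransitions} and used as a black box in the proof of Theorem~\ref{lem:MainLemma}. So there is no in-paper proof to compare your sketch against; I can only judge it on its own terms and against the machinery the present paper visibly relies on. On those terms the outline is sound and very plausibly close to the original argument: paddability gives RoughP membership (Def.~\ref{def:RoughP}), hence $\phi$ and an errorless $\mathcal{P}$; the signed parameter $\gamma(x)=\pm N_{\phi}(x)$ keyed to the Accept/Reject label is polynomial-time computable; errorlessness makes $\mathcal{A}[\mathcal{S}^{\gamma}_n]$ identically $1$ for $n>0$ and $0$ for $n<0$, \emph{provided} the slices are non-empty, which is exactly what the not-anywhere-exponentially-unbalanced lower bound $1/\textit{Poly}(n)$ supplies; and Condition~3 of Def.~\ref{PhaseDef} follows because $\vert\mathcal{B}^{\phi}_n\vert=\vert\Sigma\vert^n$ (as $\phi$ is a bijection of $\Sigma^*$), so the monotonicity requirement on $\textit{Poly}(n)\big(1/\sqrt{2}\big)^n$ gives $\vert\mathcal{B}^{\phi}_n\cap\{x\,\vert\,\mathcal{P}(x)=\text{Accept}\}\vert\geq\big(\vert\Sigma\vert/\sqrt{2}\big)^n/\textit{Poly}(0)$, which is exponential since $\vert\Sigma\vert\geq 2$. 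You correctly locate the real work in keeping both tails exponentially populated, which is precisely what the unbalancedness hypothesis buys.

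The one place you are off target is the role of the even-alphabet hypothesis. It does not enter through the parameter construction or through any ``balanced halving'' of $\Sigma$ used to encode a sign bit --- your $\gamma$ never needs such a split, since the sign is read off the output of $\mathcal{P}$, not off the string. It enters instead through the specific RoughP algorithm that Def.~\ref{PhiBalanced} is phrased in terms of: the Farago--Le construction sends $\mathcal{L}$ to $\{xx\,\vert\,x\in\mathcal{L}\}\cup\{x\,\vert\,\omega(x)\text{ is odd}\}$, where $\omega$ sums the symbols, and the odd-symbol-sum set occupies exactly half of each length slice only when $\vert\Sigma\vert$ is even. Your closing speculation about the odd case is also not what this paper does: rather than introducing a midpoint symbol or rebalancing the parameter, it transports the whole language to an even-sized alphabet via a preserving-P-isomorphism (Lemma~\ref{lem:PIsoExists}) and pulls the resulting phase transition back (Lemma~\ref{lem:PisoImpliesPhase}). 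Neither point is a gap in your argument for the even case as stated, but the attribution of where parity matters should be corrected before the sketch is written up.
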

There a few terms in Theorem~\ref{mainResultTheoremOFLASTPAPER} that are not yet defined. I therefore provide the below definitions of them.
\begin{definition}
    \label{padDef}
    A language, $\mathcal{L} \subseteq \Sigma^*$, is \underline{paddable} if and only if there exists two polynomial time computable functions:
\begin{enumerate}
        \item $\textit{Pad}: \Sigma^* \times \Sigma^* \longrightarrow \Sigma^*$,
        \item $\textit{Dec}:  \Sigma^* \longrightarrow \Sigma^*$,
\end{enumerate}
    such that, $\forall x, y \in \Sigma^*$:
    \begin{enumerate}
        \item $\textit{Pad}(x, y) \in \mathcal{L} \iff x \in \mathcal{L}$,
        \item $\textit{Dec}(\textit{Pad}(x, y)) = y$.
    \end{enumerate}
\end{definition}

\begin{definition}
    \label{PhiBalanced}
    Not-anywhere-exponentially-unbalanced languages are a subset of paddable languages.
    
    A language is \underline{not-anywhere-exponentially-unbalanced} if there exists some polynomial, $\textit{Poly}: \mathbb{N} \longrightarrow \mathbb{R}$, such that $\forall n \in \mathbb{N}$, neither the fraction of $\mathcal{B}_n^{\phi}$ that is in the language (and $\mathcal{P}$ decides correctly) nor the fraction not in the language (and $\mathcal{P}$ decides correctly) are less than $\big( \textit{Poly}(n) \big)^{-1}$, and $\textit{Poly}(n) \cdot \big( 1 / \sqrt{2} \big)^n$ is monotonically decreasing.

    If the above holds for a paddable language when ``$\mathcal{P}$" is replaced by any RoughP algorithm constructed by applying a preserving-P-isomorphism  -- to a paddable language over an alphabet larger, by one, than the current alphabet -- to the input string so that a RoughP algorithm for the new language -- constructed as in Ref.~\cite{farago2016roughly} -- can be applied, I call the language \underline{alt-NAEU}.
\end{definition} 
\section{Aims and Main Results}
\subsection{Adequately-Balanced Languages}
Unfortunately, not-anywhere-exponentially-unbalanced languages cannot (at least by me) be shown to \emph{always} exhibit a phase transition. In Ref.~\cite{jackson2025explainingubiquityphasetransitions}, it was shown to suffice for even-sized languages but considering odd-sized languages prevents it from being sufficient. For a phase transition to be ensured, the languages is required to be adequately-balanced, as defined in Def.~\ref{def:adequete}.
\begin{definition}
    \label{def:adequete}
    A language, over alphabet $\Sigma$, is \underline{adequately-balanced} if it is paddable and neither the fraction of $\mathcal{B}_n^{\phi}$ that is in the language (and $\mathcal{P}$ decides correctly) nor the fraction not in the language (and $\mathcal{P}$ decides correctly) are less than some polynomial, $\big( \textit{Poly}(n) \big)^{-1}$, and $\textit{Poly}(n) \cdot \big( 1 / \sqrt{2} \big)^n$ is monotonically decreasing.

    For languages where $\vert \Sigma \vert$ is odd, there are additional requirements. These are:
    \begin{enumerate}
        \item The distribution of elements of $\mathcal{B}^{\phi}_n \subset \Sigma^*$ that are both in and not in the language are split proportionately between the subsets in differing $\xi^{-1} \big( \mathcal{B}^{\xi^{-1} \circ \phi \circ \xi}_n \big)$ -- so as to maintain the not-anywhere-exponentially-unbalanced property of each continuous subset -- (with different values of $n \in \mathbb{N}_0$), where $\xi$ is as in Def.~\ref{def:DefingXi}.
        \item $\textit{Poly}$ is monotonically increasing and, $\forall n \in \mathbb{N}_0$,
        \begin{align} 
            \label{eqn:PolyRequirement}
            \dfrac{\partial}{ \partial x} \bigg( \textit{Poly}(x) \bigg) \bigg \vert_{x = \lambda n}
        &\leq
         \dfrac{\big \vert \ln{\big( 1 / \sqrt{2} \big)} \big \vert}{\lambda} \textit{Poly} \big( \lambda n \big),
        \end{align}
        where $\lambda = 2 \log_{\vert \Sigma \vert} \big( \vert \Sigma \vert + 1 \big) $.
        \item The language is alt-NAEU.
    \end{enumerate}
\end{definition}

\noindent \underline{\textbf{Example Functions Meeting the Requirements of \textit{Poly}}}\\
The condition in Eqn.~\ref{eqn:PolyRequirement} for the required polynomial, $\textit{Poly}$, is met if:
\begin{align}
    \textit{Poly}(n)
    &=
    \beta n + \gamma,
\end{align}
where $\beta, \gamma \in \mathbb{R}^+$ such that $\gamma \geq 4 \beta$.
\subsection{Statement of Results}
The main focus of the present paper is to prove Theorem~\ref{lem:MainLemma}.
\begin{theorem}
    \label{lem:MainLemma}
    Any adequately-balanced language over an odd-sized alphabet exhibits a phase transition.
\end{theorem}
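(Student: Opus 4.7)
The plan is to reduce the odd-alphabet case to the even-alphabet case, which is already handled by Theorem~\ref{mainResultTheoremOFLASTPAPER}. The key device is the P-isomorphism $\xi$ (referenced in Def.~\ref{def:adequete}), which I expect to be a preserving-P-isomorphism from $\Sigma^*$ to $\Pi^*$ where $\vert \Pi \vert = \vert \Sigma \vert + 1$, hence even. Because $\mathcal{L} \subseteq \Sigma^*$ is assumed alt-NAEU, the image language $\mathcal{H} = \xi(\mathcal{L}) \subseteq \Pi^*$ inherits the paddable, not-anywhere-exponentially-unbalanced structure needed to apply Theorem~\ref{mainResultTheoremOFLASTPAPER}. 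This yields a parameter $\gamma_{\Pi}: \Pi^* \longrightarrow \mathbb{R}$ under which $\mathcal{H}$ exhibits a phase transition.

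Next I would define the candidate parameter for the odd-alphabet language by pullback, $\gamma = \gamma_{\Pi} \circ \xi : \Sigma^* \longrightarrow \mathbb{R}$. Because $\xi$ is polynomial-time by Def.~\ref{def:PIsomorph} and $\gamma_{\Pi}$ is a parameter in the sense of Def.~\ref{orderParameterDef}, so is $\gamma$. Using that $\xi$ is a bijection satisfying $x \in \mathcal{L} \iff \xi(x) \in \mathcal{H}$, one gets $\xi(\mathcal{S}^{\gamma}_n) = \mathcal{S}^{\gamma_{\Pi}}_n$ and $\xi(\mathcal{S}^{\gamma}_n \cap \mathcal{L}) = \mathcal{S}^{\gamma_{\Pi}}_n \cap \mathcal{H}$, so the accepting fractions on parameter slices coincide. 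This immediately transfers conditions 1 and 2 of Def.~\ref{PhaseDef} (the limiting behavior of $\mathcal{A}[\mathcal{S}^{\gamma}_n]$ and the monotonic envelopes) from $\mathcal{H}$ to $\mathcal{L}$.

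The nontrivial step is condition 3 of Def.~\ref{PhaseDef}, the exponential growth in the size of parameter slices away from the threshold. Under $\xi$, a string of length $m$ in $\Sigma^*$ maps to a string of length roughly $m \log_{\vert \Sigma \vert + 1} \vert \Sigma \vert$ in $\Pi^*$, so the $\vert \Sigma \vert^m$ count in the domain is re-distributed across $\Pi^*$-lengths at a rate governed by $\lambda = 2 \log_{\vert \Sigma \vert}(\vert \Sigma \vert + 1)$. I would compute, slice by slice, how $\vert \mathcal{B}_n^{\phi} \vert$ transforms under the $\xi^{-1} \circ \phi \circ \xi$ conjugation appearing in Def.~\ref{def:adequete}, using the proportionate-splitting clause (item 1 of the additional requirements) to ensure each sub-slice still satisfies a not-anywhere-exponentially-unbalanced bound. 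The derivative condition Eqn.~\ref{eqn:PolyRequirement} is calibrated precisely so that when $\textit{Poly}$ is re-indexed at $\lambda n$, the product $\textit{Poly}(\lambda n) \cdot (1/\sqrt{2})^n$ remains monotonically decreasing; this is what lets the exponential-growth clause of Def.~\ref{PhaseDef} survive the length-rescaling introduced by $\xi$.

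The main obstacle I expect is exactly this bookkeeping around $\lambda$: verifying that after pulling the parameter slices back through $\xi$, the polynomial slack allowed by the balance condition on the $\Pi^*$ side is not overwhelmed by the $(1/\sqrt{2})^n$ decay once re-expressed in terms of $\Sigma^*$-lengths. The alt-NAEU clause and the derivative inequality in Def.~\ref{def:adequete} appear to be engineered precisely for this, so the work is primarily to make the translation rigorous and check that nothing degenerates at the boundary between the sub-slices indexed by different values of $n$ on the $\Sigma^*$ side. Once that technicality is discharged, the three conditions of Def.~\ref{PhaseDef} follow and $\mathcal{L}$ exhibits a phase transition with parameter $\gamma$.
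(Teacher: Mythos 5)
Your proposal is correct and follows essentially the same route as the paper: construct the even-alphabet language $\mathcal{H}$ via the preserving-P-isomorphism $\xi$ (the paper's Lemma~\ref{lem:PIsoExists}), invoke Theorem~\ref{mainResultTheoremOFLASTPAPER} on $\mathcal{H}$, and pull the parameter back through $\xi$ (Lemma~\ref{lem:PisoImpliesPhase}). The one difference in emphasis is that the $\lambda$/derivative-condition bookkeeping you assign to transferring Condition~3 of Def.~\ref{PhaseDef} back to $\Sigma^*$ actually lives, in the paper, in the forward step of verifying that $\mathcal{H}$ is not-anywhere-exponentially-unbalanced so that Theorem~\ref{mainResultTheoremOFLASTPAPER} applies at all (Lemmas~\ref{lem:RelationTheorem} and~\ref{lem:polyBoundedIfPoly}); the backward transfer of Condition~3 is immediate because $\xi$ is a bijection and hence preserves slice cardinalities exactly.
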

Theorem~\ref{lem:MainLemma} is proven in Sec.~\ref{sec:proofOfMainLemma} and is useful as, in combination with Theorem~\ref{mainResultTheoremOFLASTPAPER}, it implies Theorem~\ref{mainResultTheorem}.
\begin{theorem}
\label{mainResultTheorem}
    Any adequately-balanced language exhibits a phase transition.
\end{theorem}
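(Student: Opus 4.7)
The plan is to proceed by a case analysis on the parity of the alphabet size, $\vert \Sigma \vert$, since this is the single distinguishing feature between the two component theorems. For any language $\mathcal{L} \subseteq \Sigma^*$ assumed adequately-balanced, either $\vert \Sigma \vert$ is even or it is odd, and each case will be handled by one of the two prior results — Theorem~\ref{mainResultTheoremOFLASTPAPER} for the even case and Theorem~\ref{lem:MainLemma} for the odd case.

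The odd case is essentially immediate: Theorem~\ref{lem:MainLemma} is precisely the statement that adequately-balanced languages over odd-sized alphabets exhibit a phase transition, so no further work is required beyond noting the hypothesis applies. The structure of this step is simply: observe $\vert \Sigma \vert$ is odd, observe $\mathcal{L}$ is adequately-balanced (by assumption), and apply Theorem~\ref{lem:MainLemma}.

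The even case requires the small bridging observation that the adequately-balanced property, restricted to the even-$\vert \Sigma \vert$ setting, implies the not-anywhere-exponentially-unbalanced property of Def.~\ref{PhiBalanced} on a paddable language. This is because Def.~\ref{def:adequete} requires paddability together with the polynomial lower bound on the two accepting/rejecting fractions of $\mathcal{B}^{\phi}_n$ and the monotone-decreasing condition on $\textit{Poly}(n) \cdot (1/\sqrt{2})^n$; these conditions are exactly the hypotheses of Def.~\ref{PhiBalanced}. The additional constraints imposed by Def.~\ref{def:adequete} (the proportional splitting across the $\xi^{-1}(\mathcal{B}^{\xi^{-1} \circ \phi \circ \xi}_n)$ subsets, the differential bound in Eqn.~\ref{eqn:PolyRequirement}, and the alt-NAEU requirement) are only invoked for odd $\vert \Sigma \vert$, so in the even case they are vacuously satisfied or simply not needed. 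With this noted, Theorem~\ref{mainResultTheoremOFLASTPAPER} applies directly and yields a phase transition.

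The main obstacle, if there is one, is less a mathematical difficulty than a bookkeeping check: one must carefully verify that the polynomial $\textit{Poly}$ witnessing the adequately-balanced condition can serve as the witness for the not-anywhere-exponentially-unbalanced condition in the even case, and that the P-isomorphism $\phi$ with respect to which the fractions are bounded is the same one used to place $\mathcal{L}$ in RoughP per the construction of Ref.~\cite{farago2016roughly}. Once that bookkeeping is verified, the two cases cover all possible alphabet sizes (recall Def.~\ref{def:alphabet} requires $\vert \Sigma \vert \geq 2$, so every alphabet is either even- or odd-sized), completing the proof.
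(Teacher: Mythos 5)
Your proposal is correct and follows essentially the same route as the paper's proof: a case split on the parity of $\vert \Sigma \vert$, applying Theorem~\ref{mainResultTheoremOFLASTPAPER} in the even case and Theorem~\ref{lem:MainLemma} in the odd case. Your extra bridging observation — that for even $\vert \Sigma \vert$ the adequately-balanced conditions reduce to paddability plus the not-anywhere-exponentially-unbalanced property, so Theorem~\ref{mainResultTheoremOFLASTPAPER} genuinely applies — is a detail the paper's proof leaves implicit, and it is correctly handled.
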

\begin{proof}
    For any adequately-balanced language over an alphabet, $\Sigma$; either $\vert \Sigma \vert$ is even, in which case Theorem~\ref{mainResultTheoremOFLASTPAPER} implies it exhibits a phase transition, or $\vert \Sigma \vert$ is odd, in which case Theorem~\ref{lem:MainLemma} implies it exhibits a phase transition. 
\end{proof}
Theorem~\ref{mainResultTheorem} furthers the goal of Ref.~\cite{jackson2025explainingubiquityphasetransitions}, helping to explain the ubiquity of phase transitions in decision problems even further.

Due to the previously-mentioned habit of phenomena in decision problems reflecting their more complicated and hard-to-study analogues in computing -- and perhaps physics -- more generally, there is good reason to believe this result may aid in identifying, explaining, and classifying phase transitions in more general systems, where understanding emergent behaviours such as phase transitions is more immediately and obviously important (e.g. it is clearly important to understand the melting points of various metals when designing safety-critical systems that will operate in high-temperature environments).   

\subsection{Main Result: Proof of Theorem~\ref{lem:MainLemma}}
\label{sec:proofOfMainLemma}
For readability and clarity, instead of disrupting the flow of the paper with a very long proof, the proof of Theorem~\ref{lem:MainLemma} presented below instead relies on Lemma~\ref{lem:PIsoExists} and Lemma~\ref{lem:PisoImpliesPhase}, both of which are proved in Appendix~\ref{app:LemsForMainLemma}. However, the proof of Theorem~\ref{lem:MainLemma} depends most crucially on Theorem~\ref{mainResultTheoremOFLASTPAPER}, and uses the phase transitions in paddable not-anywhere-exponentially-unbalanced language over an even-sized alphabets to construct phase transitions in the equivalent languages over odd-sized alphabets via preserving-P-isomorphisms. 

\begin{proof}[Proof of Theorem~\ref{lem:MainLemma}]
    Let $\mathcal{L}$ be an adequately-balanced language over an \emph{odd}-sized alphabet.
    Using Lemma~\ref{lem:PIsoExists} (in Appendix~\ref{app:LemsForMainLemma}), $\mathcal{L}$ must be preserving-P-isomorphic to a paddable not-anywhere-exponentially-unbalanced language over an \emph{even}-sized alphabet, which I refer to as $\mathcal{H}$.
    
    As $\mathcal{H}$ is a paddable not-anywhere-exponentially-unbalanced language over an even-sized alphabet, Theorem~\ref{mainResultTheoremOFLASTPAPER} implies that $\mathcal{H}$ exhibits a phase transition.
    Therefore, due to Lemma~\ref{lem:PisoImpliesPhase} (in Appendix~\ref{app:LemsForMainLemma}), $\mathcal{L}$ must exhibit a phase transition as it is preserving-P-isomorphic to a language, $\mathcal{H}$, that exhibits a phase transition.
\end{proof}

\section{Discussion}
Herein I have furthered the work of Ref.~\cite{jackson2025explainingubiquityphasetransitions}: showing that all adequately-balanced (as defined in Def.~\ref{def:adequete}) languages exhibit a phase transition. Before, due to Ref.~\cite{jackson2025explainingubiquityphasetransitions}, this was only known to hold for languages over even-sized languages. Therefore, the demonstration, in Theorem~\ref{lem:MainLemma}, that all adequately-balanced languages over odd-sized alphabets exhibit a phase transition entails that all adequately-balanced languages exhibit a phase transition (as in Theorem~\ref{mainResultTheorem}).

Given this paper served to relax the requirement -- present in Ref.~\cite{jackson2025explainingubiquityphasetransitions} -- that the languages considered must be over even-sized alphabets, it is natural to ask how much further the restrictions used in this paper can be relaxed. 
The prime candidate for elimination or relaxation, in Theorem~\ref{mainResultTheorem}, is the assumption that the languages shown to exhibit phase transitions are required to be not-anywhere-exponentially-unbalanced or the conditions on the associated polynomial function. This is, in part, as sparsity~\cite{sparseCitation} (defined in Def.~\ref{def:sparseLanguages} and admittedly a slightly different assumption to being not-anywhere-exponentially-unbalanced) of a language is incompatible with paddability (the other required condition of the languages shown to exhibit phase transitions in Theorem~\ref{mainResultTheorem}). This is demonstrated in Appendix~\ref{app:PaddingSparsity}. However, I leave the possibility of relaxing this condition -- perhaps by showing that paddability imposes even stricter limits on how dense a language must be -- open for future work.  

The other assumptions required -- only of languages over odd alphabets -- in the definition of a language being adequately balanced are mostly present to enable the proof techniques used in this paper. There is no reason to believe they are fundamental restrictions on when phase transitions can occur, and so it is likely that these conditions can be significantly relaxed but that may require a more advanced array of techniques than the ones used herein.

As there are promise-BQP~\footnote{the complexity class corresponding to practical quantum computing~\cite{Arora_Barak_2009_201-236}} languages (and promise-BQP-complete languages) known to be paddable~\cite{jackson2024extensivelypbiimmunepromisebqpcompletelanguages}, it is feasible that there may be applications of this work to the verification of quantum computations (both digital~\cite{Barz2013, v012a003, Hangleiter_2017, Kashefi_2017,  Gheorghiu_2018, Ferracin_2019, Markham_2020, Ferracin_2021, jackson2025accreditationlimitedadversarialnoise} and analogue~\cite{Shaffer2021, doi:10.1073/pnas.2309627121, jackson2025improvedaccreditationanaloguequantum}). By building decision problems based on the output of those computations and, assuming they can be contrived to be paddable and not-anywhere-exponentially-unbalanced, using their phase transitions as a heuristic to check the outputs of the computations.

A protocol for such a task may appear as in Protocol~\ref{PhaseVerificationProtocol}.

\begin{figure}
    \centering
\begin{algorithm}[H]

\SetAlgorithmName{Protocol}{protocol}{List of Protocols}
\noindent \underline{\textbf{INPUTS:}}\\
$\textbf{P}$: A promise decision problem decidable via a BQP-device\\
$\textit{p}$: A instance of a valid input to the problem $\textbf{P}$\\
\hrulefill

\begin{enumerate}
    \item Construct a set, $\mathcal{P}_{\textit{p}}$, of instances of a valid input to the problem $\textbf{P}$ expected to experience comparable error --  during the execution of the circuit to decide it on the BQP-device -- to $\textit{p}$
    \item Use a phase transition of $\textbf{P}$ to obtain solutions, $\{ S'_j \}^{\vert \mathcal{P}_{\textit{p}}\vert }_{j = 1}$ predicted by it, and corresponding confidences, $\{ C_j \}^{\vert \mathcal{P}_{\textit{p}}\vert }_{j = 1}$, for each instance in $\mathcal{P}_{\textit{p}}$
    \item Execute the -- potentially erroneous -- circuits, on the BQP-device, to decide each instance in $\mathcal{P}_{\textit{p}}$. Call the outputs $\{ S_j \}^{\vert \mathcal{P}_{\textit{p}}\vert }_{j = 1}$ and assume $S_0$ is the BQP-device's solution to $\textit{p}$
    \item Initialize a variable, \textit{overall\_confidence} = 1
    \item \textbf{For} $S_j$ \textbf{in} $\{ S_j \}^{\vert \mathcal{P}_{\textit{p}}\vert }_{j = 1}$:
    \begin{enumerate}
        \item \textbf{If} ( $S_j \not = S_j'$ ):
        \begin{enumerate}
            \item \textit{overall\_confidence} *= $(1-C_j)$
        \end{enumerate}
    \end{enumerate}
\end{enumerate}
\noindent \hrulefill\\
    \noindent \underline{\textbf{RETURN:}} \\
    $S_0$ (the result the BQP-device gives for $\textit{p}$) and \textit{overall\_confidence}
\caption{Suggested protocol sketch for verifying promise-BQP languages.
 \label{PhaseVerificationProtocol}}
\end{algorithm}
\end{figure}
I leave the further and proper development of this approach to future work.

\section{Acknowledgements}
The author acknowledges the support of the Quantum Advantage Pathfinder (EP/X026167/1).
\newpage

\bibliography{References}

\newpage
\onecolumngrid
\appendix
\section{Alternative Expression of Being Not-Anywhere-Exponentially-Unbalanced}
Def.~\ref{PhiBalanced}, the definition of being not-anywhere-exponentially-unbalanced used herein and in Ref.~\cite{jackson2025explainingubiquityphasetransitions}, may be re-expressed as an equivalent, more mathematical, expression. This is given below as Def~\ref{def:NewNAEU}.
\begin{definition}
    \label{def:NewNAEU}
    A language, $\mathcal{L} \subseteq \Sigma^*$, is \underline{not-anywhere-exponentially-unbalanced} if and only if:
    \begin{align}
        \dfrac{\big \vert \big \{x \in \mathcal{B}_n^{\phi} \text{ } \vert \text{ } \mathcal{P}(x) = \text{Accept} \text{ } \big\} \cap \mathcal{L} \big \vert}{\big \vert \mathcal{B}_n^{\phi} \big \vert},
         \dfrac{\big \vert \big \{x \in \mathcal{B}_n^{\phi} \text{ } \vert \text{ } \mathcal{P}(x) = \text{Reject} \text{ } \big\} \cap \mathcal{L}^c \big \vert}{\big \vert \mathcal{B}_n^{\phi} \big \vert}
         \geq \big( \textit{Poly}(x) \big)^{-1},
    \end{align}
    where $\mathcal{P}: \Sigma^* \longrightarrow \{\text{Accept}, \text{Reject}, \perp\}$  and $\phi: \Sigma^* \longrightarrow \Sigma^*$ are as in Def.~\ref{def:RoughP}, and $\mathcal{L}^c = \big\{ x \in \Sigma^* \text{ }\vert  \text{ } x \not \in \mathcal{L} \big\}$; for some polynomial, $\textit{Poly}: \mathbb{N}_0 \longrightarrow \mathbb{R}$, such that: $\forall n \in \mathbb{N}_0$,
    \begin{align}
    \dfrac{\partial}{\partial x} \bigg(\textit{Poly}(x) \cdot \big( 1 / \sqrt{2} \big)^x \bigg) \bigg \vert_{x = n} \leq 0.
    \end{align}
\end{definition}
For convenience of comparison, I also restate Def.~\ref{PhiBalanced} below.
\begin{definitionRestate}\ref{PhiBalanced} (Restated)\\
    \label{PhiBalancedRestate}
    A language is \underline{not-anywhere-exponentially-unbalanced} if there exists some polynomial, $\textit{Poly}: \mathbb{N} \longrightarrow \mathbb{R}$, such that $\forall n \in \mathbb{N}$, neither the fraction of $\mathcal{B}_n^{\phi}$ that is in the language (and $\mathcal{P}$ decides correctly) nor the fraction not in the language (and $\mathcal{P}$ decides correctly) are less than $\big( \textit{Poly}(n) \big)^{-1}$, and $\textit{Poly}(n) \cdot \big( 1 / \sqrt{2} \big)^n$ is monotonically decreasing.
\end{definitionRestate}
For completeness, I state the below Lemma~\ref{lem:defsEquiv} without proof (as Def.~\ref{def:NewNAEU} is just Def.~\ref{PhiBalanced} expressed more mathematically).
\begin{lemma}
    \label{lem:defsEquiv}
    Def.~\ref{PhiBalanced} and Def.~\ref{def:NewNAEU} are equivalent.
\end{lemma}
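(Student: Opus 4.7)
The plan is to verify both directions of the equivalence by translating the prose of Def.~\ref{PhiBalanced} into the symbolic form of Def.~\ref{def:NewNAEU}. Because the author explicitly states that the latter is the former ``expressed more mathematically'', I expect the bulk of the work to be unpacking definitions and using the errorless property of $\mathcal{P}$ from Def.~\ref{def:RoughP}.

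First, I would handle the two fractional inequalities. The phrase ``the fraction of $\mathcal{B}_n^{\phi}$ that is in the language and $\mathcal{P}$ decides correctly'' picks out the subset $\{x \in \mathcal{B}_n^{\phi} \mid x \in \mathcal{L} \text{ and } \mathcal{P} \text{ decides } x \text{ correctly}\}$. Since Def.~\ref{def:RoughP} guarantees $\mathcal{P}$ is errorless, $\mathcal{P}$ correctly decides $x \in \mathcal{L}$ precisely when $\mathcal{P}(x) = \text{Accept}$, and conversely $\mathcal{P}(x) = \text{Accept}$ forces $x \in \mathcal{L}$. Hence this subset equals $\{x \in \mathcal{B}_n^{\phi} \mid \mathcal{P}(x) = \text{Accept}\} \cap \mathcal{L}$, and dividing its cardinality by $\vert \mathcal{B}_n^{\phi} \vert$ reproduces the first quantity in Def.~\ref{def:NewNAEU}. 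The identical argument, applied to $\text{Reject}$ and $\mathcal{L}^c$, yields the second. Both definitions therefore impose the same $\big( \textit{Poly}(n) \big)^{-1}$ lower bound on the same two ratios.

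Second, I would reconcile the two monotonicity conditions. Writing $g(x) = \textit{Poly}(x) \cdot (1/\sqrt{2})^x$, Def.~\ref{PhiBalanced} requires $g$ to be monotonically decreasing at integer arguments, while Def.~\ref{def:NewNAEU} requires $g'(n) \leq 0$ for every $n \in \mathbb{N}_0$. Because $g$ is a smooth product of a polynomial and a decaying exponential, it has at most finitely many critical points, so both conditions are equivalent to $g$ being eventually non-increasing, with the finitely many initial values absorbed by the freedom to adjust the choice of $\textit{Poly}$.

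The main obstacle, though essentially technical, is this last step. Discrete monotonicity on $\mathbb{N}$ and a pointwise non-positive derivative at every integer are not literally identical conditions for arbitrary smooth functions, and a careful proof should either restrict attention to the asymptotic regime where the exponential decay dominates the polynomial growth, or explicitly note that any admissible $\textit{Poly}$ witnessing one definition can be perturbed to a nearby polynomial witnessing the other without altering the rest of Def.~\ref{def:adequete}. Once this caveat is addressed, the two definitions reduce to the same statement in different notation and the lemma follows by direct substitution.
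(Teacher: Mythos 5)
The paper states Lemma~\ref{lem:defsEquiv} \emph{without any proof}, remarking only that Def.~\ref{def:NewNAEU} is Def.~\ref{PhiBalanced} ``expressed more mathematically,'' so there is no argument in the paper to compare yours against; your proposal supplies the missing content. Your unpacking of the two ratio conditions is correct: since $\mathcal{P}$ is errorless, ``in the language and decided correctly'' is exactly $\{x \in \mathcal{B}_n^{\phi} \mid \mathcal{P}(x)=\text{Accept}\}\cap\mathcal{L}$ (the intersection with $\mathcal{L}$ being redundant but harmless), and dually for Reject and $\mathcal{L}^c$, so the two inequality conditions coincide. Your caveat about the monotonicity clause is also a genuine observation rather than pedantry: writing $g(x)=\textit{Poly}(x)(1/\sqrt{2})^x$, the sign of $g'$ is governed by $\textit{Poly}'(x)-\textit{Poly}(x)\ln\sqrt{2}$, a polynomial that can be positive on a subinterval strictly between consecutive integers while being nonpositive at every integer, so ``$g$ decreasing on $\mathbb{N}$'' and ``$g'(n)\le 0$ for all $n\in\mathbb{N}_0$'' are not literally the same condition; they agree only up to finitely many exceptional points, which is exactly the asymptotic/perturbation patch you propose. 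One minor quibble: you should state explicitly which reading of ``monotonically decreasing'' (sequence on $\mathbb{N}$ versus real function on $[0,\infty)$) you adopt in Def.~\ref{PhiBalanced}, since under the real-function reading one implication becomes trivial and only the converse needs the finitely-many-critical-points argument. The lemma also inherits an ambiguity from Def.~\ref{def:NewNAEU} itself (the bound is written as $(\textit{Poly}(x))^{-1}$ where it should be $(\textit{Poly}(n))^{-1}$), which is worth flagging but does not affect your argument.
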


\section{Auxiliary Lemmas for the Proof of Theorem~\ref{lem:MainLemma}}
\label{app:LemsForMainLemma}
This appendix states and proves Lemma~\ref{lem:PIsoExists} and Lemma~\ref{lem:PisoImpliesPhase}, both of which are required by the proof of Theorem~\ref{lem:MainLemma} in Sec.~\ref{sec:proofOfMainLemma}.
There are also several other lemmas, in this appendix, to assist in proving these lemmas.
\subsection{Proving Lemma~\ref{lem:PIsoExists}}
\begin{lemma}
    \label{lem:BasicIsomorpLemma}
    Any language over an alphabet $\Sigma = \{ 1, 2, ..., \vert \Sigma \vert \}$ is preserving-P-isomorphic to a language over the alphabet $\Pi = \{ 1, 2, ..., \vert \Sigma \vert,  \vert \Sigma \vert + 1 \}$.
\end{lemma}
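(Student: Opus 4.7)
The plan is to construct an explicit P-isomorphism $\xi: \Sigma^* \longrightarrow \Pi^*$ and then take the witness language on the $\Pi$-side to be $\mathcal{H} := \xi(\mathcal{L})$. Because Def.~\ref{def:preservingPIso} only requires the \emph{existence} of some $\mathcal{H} \subseteq \Pi^*$ such that $x \in \mathcal{L} \iff \xi(x) \in \mathcal{H}$, and because this biconditional is automatic once $\mathcal{H}$ is defined as the image of $\mathcal{L}$ under a bijection, the entire lemma reduces to producing a single suitable $\xi$.

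For $\xi$, my approach is to route through $\mathbb{N}_0$ using bijective numeration. Writing $k = |\Sigma|$, bijective base-$k$ numeration yields a bijection $f: \Sigma^* \longrightarrow \mathbb{N}_0$ sending the word $d_\ell d_{\ell-1} \cdots d_1$ (with each $d_i \in \{1,\ldots,k\}$) to $\sum_{i=1}^{\ell} d_i \, k^{i-1}$, and the empty word to $0$. The analogous bijection $g: \Pi^* \longrightarrow \mathbb{N}_0$ uses base $k+1$ with digit set $\{1,\ldots,k+1\}$. I then define $\xi := g^{-1} \circ f$, which is a bijection $\Sigma^* \longrightarrow \Pi^*$ by composition of bijections.

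The step requiring the most care, though still routine, is verifying polynomial-time complexity in both directions as demanded by Def.~\ref{def:PIsomorph}. Each direction is a change of base between two fixed integer bases, implementable by standard repeated-division and modular-reduction algorithms in time polynomial in the input length. The output length is bounded by a constant factor of the input length in each direction (roughly $\log_{k+1} k < 1$ going $\Sigma^* \to \Pi^*$, and its reciprocal going the other way), so there is no risk of an exponential blow-up, and invertibility is explicit since both $f$ and $g^{-1}$ are computable. Once $\xi$ is verified to be a P-isomorphism, defining $\mathcal{H} := \xi(\mathcal{L})$ makes the biconditional $x \in \mathcal{L} \iff \xi(x) \in \mathcal{H}$ hold by construction, yielding the preserving-P-isomorphism required by the lemma.

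I do not expect a genuine obstacle in this proof; the only places that invite bookkeeping mistakes are the consistent treatment of the empty word across $f$ and $g$ and the verification that both base conversions stay in polynomial time. Both are standard and can be handled by inspection of the base-conversion algorithm.
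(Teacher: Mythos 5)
Your proposal is correct and takes essentially the same route as the paper: both factor $\xi$ through $\mathbb{N}_0$ by interpreting words as integers and converting between bases $\vert\Sigma\vert$ and $\vert\Pi\vert$, and both obtain $\mathcal{H}$ simply as $\xi(\mathcal{L})$ so that the preservation biconditional holds by construction. The only noteworthy difference is a point in your favour: you use bijective numeration consistently on both alphabets (digit sets $\{1,\dots,k\}$ and $\{1,\dots,k+1\}$), which makes the bijection with $\mathbb{N}_0$ immediate, whereas the paper pairs a bijective-numeration encoding $\theta_{\Sigma}$ with a standard mod-based decoding $\alpha_{\Pi}$ whose digits range over $\{0,\dots,\vert\Pi\vert-1\}$, a mismatch your convention avoids.
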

\begin{proof}
    Let $\mathcal{L} \subseteq \Sigma^*$ be a language. I then aim to construct the required language over $\Pi$ that $\mathcal{L}$ is preserving-P-isomorphic to.
    This is accomplished by defining the language $\mathcal{L}$ is preserving-P-isomorphic to, $\mathcal{H}$, by:
    \begin{align}
        \label{eqn:preservinginAppendixA}
        \xi(x) \in \mathcal{H} \iff x \in \mathcal{L},
    \end{align}
    where $\xi: \Sigma^* \longrightarrow \Pi^*$ is some to-be-defined polynomial-time computable P-isomorphism. I.e. Eqn.~\ref{eqn:preservinginAppendixA} is tautologically true.
    
    Note that this definition automatically makes $\xi$ a preserving-P-isomorphism (as in Def.~\ref{def:preservingPIso}) between $\mathcal{L}$ and $\mathcal{H}$.
    
    Properly defining $\xi$ first requires defining two functions that are composed to form it, in Def.~\ref{def:OmegaSigma} and Def.~\ref{def:alphaPi}.
    \begin{definition}
        \label{def:OmegaSigma}
        Define $\theta_{\Sigma}: \Sigma^* \longrightarrow \mathbb{N}_0$ by: $\forall x \in \Sigma^*$,
        \begin{align}
            \theta_{\Sigma}(x)
            &=
            \sum^{\vert x \vert - 1}_{j = 0} \bigg( \vert \Sigma \vert^j \big[ x \big]_j \bigg),
        \end{align}
        where $\big[ \cdot \big]_j: \Sigma^* \longrightarrow \mathbb{N}^{\leq \vert \Sigma \vert}$ returns the symbol with index $j \in \mathbb{N}$ (using zero-indexing) in its argument word~\footnote{$\big[ \cdot \big]_j$ is also defined polymorphically to function correctly regardless of the alphabet in use.}. I.e. $\theta_{\Sigma}$ interprets any string in $\Sigma^*$ as a little-endian expression of an integer, in base $\vert \Sigma \vert$, and returns that integer.
    \end{definition}
    \begin{definition}
        \label{def:alphaPi}
        Define~\footnote{Note that this formula, in Eqn.~\ref{eqn:alphaDefiniEquation}, is zero-indexing.} $\alpha_{\Pi}: \mathbb{N}_0 \longrightarrow \Pi^*$ by: $\forall n \in \mathbb{N}_0$, $\forall j \in \mathbb{N}^{\leq \vert \alpha_{\Pi}(n) \vert - 1}$ \footnote{I note that $\vert \alpha_{\Pi}(n) \vert$ can always be efficiently calculated without using Eqn.~\ref{eqn:alphaDefiniEquation}.},
        \begin{align}
            \label{eqn:alphaDefiniEquation}
            \big[ \alpha_{\Pi}(n) \big]_{j}
            &=
            \bigg \lfloor \dfrac{n}{\vert \Pi \vert^j}  \bigg \rfloor \text{ mod } \vert \Pi \vert.
        \end{align}
    \end{definition}
    I then define my proposed preserving-P-isomorphism, $\xi: \Sigma^* \longrightarrow \Pi^*$, as in Def.~\ref{def:DefingXi}.
    \begin{definition}
    \label{def:DefingXi}
        Define the preserving-P-isomorphism, 
        $\forall x \in \Sigma^*$, $\xi: \Sigma^* \longrightarrow \Pi^*$, by:
    \begin{align}
        \xi(x)
        &=
        \alpha_{\Pi} \big( \theta_{\Sigma}(x) \big).
    \end{align}
    As each of its constituent parts are polynomial-time computable, so is $\xi$. 
    \end{definition}
    
    Therefore it only remains to show $\xi: \Sigma^* \longrightarrow \Pi^*$ is an isomorphism. For this, I propose the following candidate for the inverse of $\xi$ (i.e. $\xi^{-1}: \Pi^* \longrightarrow \Sigma^*$): $\forall y \in \Pi^*$,
    \begin{align}
        \xi^{-1}(y) &= \alpha_{\Sigma} \big( \theta_{\Pi}(y) \big).
    \end{align}
    $\xi^{-1}$ is polynomial time computable as, like $\xi$, it is composed of a fixed number of polynomial time functions and inverts $\xi$ as: for any alphabet, $\Xi$, and $\forall z \in \Xi^*$, $\forall n \in \mathbb{N}_0$,
    \begin{align}
        &\forall j \in \mathbb{N}^{\leq \vert \alpha_{\Xi} \big( \theta_{\Xi}(z) \big) \vert - 1}, \big[  \alpha_{\Xi} \big( \theta_{\Xi}(z) \big) \big]_{k}
        =
        \bigg \lfloor \dfrac{\sum^{\vert z \vert - 1}_{j = 0} \big( \vert \Xi \vert^j \big[ z \big]_j \big)}{\vert \Xi \vert^k}  \bigg \rfloor \text{ mod } \vert \Xi \vert
        =
        \bigg \lfloor \sum^{\vert z \vert - 1}_{j = 0} \big( \vert \Xi \vert^{j-k} \big[ z \big]_j \big)  \bigg \rfloor \text{ mod } \vert \Xi \vert
         =
         \sum^{\vert z \vert -1}_{j = k} \big( \vert \Xi \vert^{j-k} \big[ z \big]_j \big) \text{ mod } \vert \Xi \vert
         =
        \big[ z \big]_{k}
        \label{Eqn:FirstToRedInEqn}\\
        &\text{ and; as } \forall n \in \mathbb{N}_0, \exists q_n \in \Xi^* \text{ such that } \theta_{\Xi}(q_n) = n; \hspace{0.5cm} \theta_{\Xi} \big( \alpha_{\Xi}\big (n \big) \big)
        =
        \theta_{\Xi} \big( \alpha_{\Xi}\big ( \theta_{\Xi}(q_n) \big) \big)
        =
        \theta_{\Xi}(q_n)
        = n
        \label{Eqn:secondToRedInEqn},
    \end{align}
    where the second equality in Eqn.~\ref{Eqn:secondToRedInEqn} follows from Eqn.~\ref{Eqn:FirstToRedInEqn}. Therefore, I conclude this proof by showing: $\forall x \in \Sigma^*$, $\forall y \in \Pi^*$,
    \begin{align}
        \xi^{-1} (\xi(x))
        =
        \alpha_{\Sigma} \big( \theta_{\Pi}( \alpha_{\Pi} \big( \theta_{\Sigma}(x) \big)) \big)
        =
        \alpha_{\Sigma} \big( \theta_{\Sigma}(x) \big)
        =
        x
       \hspace{0.5cm} \text{ and } \hspace{0.5cm}
        \xi(\xi^{-1}(y)) 
        &=
        \alpha_{\Pi} \big( \theta_{\Sigma}( \alpha_{\Sigma} \big( \theta_{\Pi}(y) \big) ) \big)
        = 
        \alpha_{\Pi} \big( \theta_{\Pi}(y) \big)
        =
        y.
    \end{align}
    Hence $\xi$ is an isomorphism. Therefore, also using Eqn.~\ref{eqn:preservinginAppendixA} and that $\xi$ can be impliemented efficiently, I conclude that  $\xi$ is a preserving-P-isomorphism.
\end{proof}
The situation established in Lemma~\ref{lem:BasicIsomorpLemma} is depicted in Fig.~\ref{fig:Lemma3Diagram}.
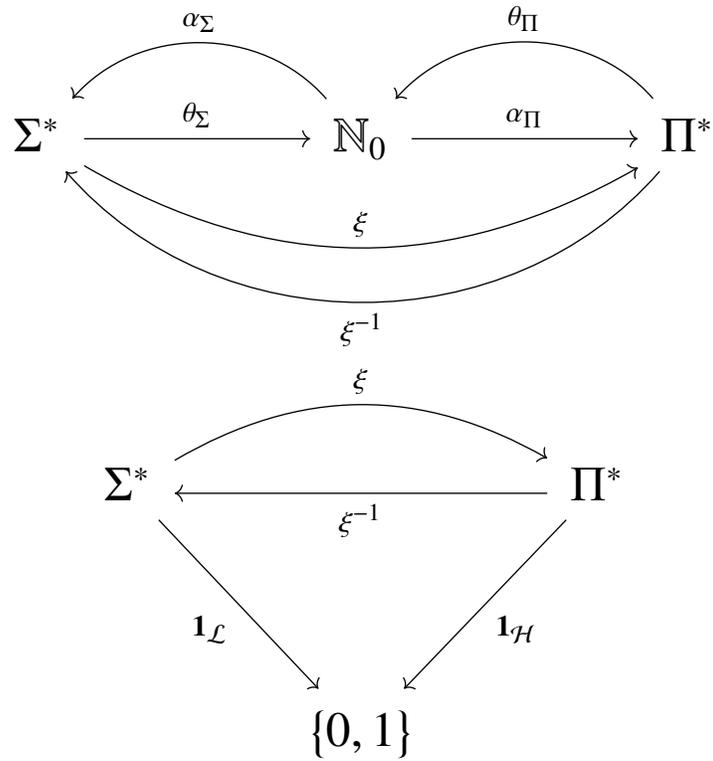
\begin{figure}[h!]
    \centering
    \begin{tikzcd}[scale cd=2.3, row sep=huge,column sep=huge]
\Sigma^* \arrow[rr, "\theta_{\Sigma}"] \arrow[rrrr, "\xi", bend right] &  & \mathbb{N}_0 \arrow[rr, "\alpha_{\Pi}"] \arrow[ll, "\alpha_{\Sigma}"', bend right=49] &  & \Pi^* \arrow[ll, "\theta_{\Pi}"', bend right=49] \arrow[llll, "\xi^{-1}", bend left=49]
\end{tikzcd}
    
       \begin{tikzcd}[scale cd=2.3, row sep=huge,column sep=huge] 
\Sigma^* \arrow[rr, "\xi", bend left] \arrow[rdd, "\mathbf{1}_{\mathcal{L}}"'] &                       & \Pi^* \arrow[ll, "\xi^{-1}", shift left] \arrow[ldd, "\mathbf{1}_{\mathcal{H}}"] \\
                                                                               &                       &                                                                                  \\
                                                                               & {\big \{ 0, 1 \big\}} &                                                                                 
\end{tikzcd}
    \caption{Diagrams of the situation constructed in Lemma~\ref{lem:BasicIsomorpLemma}, where $\mathbf{1}_{\mathcal{L}}: \Sigma^* \longrightarrow \{ 0, 1\}$ is the indicator function of $\mathcal{L}$, defined as, $\forall x \in \Sigma^*$, $\mathbf{1}_{\mathcal{L}}(x) = \begin{cases}
        1 &\text{ if } x \in \mathcal{L}\\
        0 &\text{ if } x \not \in \mathcal{L}
    \end{cases}$ and $\mathbf{1}_{\mathcal{H}}: \Sigma^* \longrightarrow \{ 0, 1\}$ is the indicator function of $\mathcal{H}$, defined as, $\forall x \in \Pi^*$, $\mathbf{1}_{\mathcal{H}}(x) = \begin{cases}
        1 &\text{ if } x \in \mathcal{H}\\
        0 &\text{ if } x \not \in \mathcal{H}
    \end{cases}$.\\ 
    The upper diagram depicts the construction of $\xi$ and $\xi^{-1}$, while the lower diagram depicts the functioning of $\xi$ and $\xi^{-1}$ to preserve membership of the respective languages.
    Lemma~\ref{lem:BasicIsomorpLemma} is equivalent to saying that a $\mathcal{H} \subseteq \Pi^*$ exists for any $\mathcal{L} \subseteq \Sigma^*$ such that the lower diagram commutes.}
    \label{fig:Lemma3Diagram}
\end{figure}

\begin{lemma}
    \label{lem:PIsoExists}
    Any adequately-balanced language over an odd-sized alphabet is preserving-P-isomorphic to a paddable not-anywhere-exponentially-unbalanced language over an even-sized alphabet.
\end{lemma}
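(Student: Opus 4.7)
The plan is to use Lemma~\ref{lem:BasicIsomorpLemma} as the principal tool. Given an adequately-balanced language $\mathcal{L} \subseteq \Sigma^*$ with $|\Sigma|$ odd, I would invoke that lemma with $\Pi = \{1,\ldots,|\Sigma|+1\}$, which has even cardinality precisely because $|\Sigma|$ is odd. This immediately supplies the preserving-P-isomorphism $\xi : \Sigma^* \to \Pi^*$ of Def.~\ref{def:DefingXi} together with the uniquely determined language $\mathcal{H} := \xi(\mathcal{L}) \subseteq \Pi^*$, already preserving-P-isomorphic to $\mathcal{L}$. All that then remains is to verify that $\mathcal{H}$ is both paddable and not-anywhere-exponentially-unbalanced.

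Paddability of $\mathcal{H}$ is essentially routine: I would transport the Pad/Dec functions of $\mathcal{L}$ through $\xi$, defining $\mathrm{Pad}_{\mathcal{H}}(x,y) := \xi(\mathrm{Pad}_{\mathcal{L}}(\xi^{-1}(x), \xi^{-1}(y)))$ and $\mathrm{Dec}_{\mathcal{H}}(z) := \xi(\mathrm{Dec}_{\mathcal{L}}(\xi^{-1}(z)))$. Polynomial-time computability is immediate from composition, and the two paddability axioms in Def.~\ref{padDef} transfer directly via the preserving property Eqn.~\ref{eqn:PreservingAspect} and the corresponding axioms for $\mathcal{L}$.

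For NAEU of $\mathcal{H}$ I would take $\phi' := \xi \circ \phi \circ \xi^{-1}$ as the P-isomorphism over $\Pi^*$, where $\phi$ is the P-isomorphism accompanying the RoughP witness for $\mathcal{L}$, and take $\mathcal{P}'$ to be the RoughP algorithm for $\mathcal{H}$ furnished by the construction in Ref.~\cite{farago2016roughly} applied to the Pad/Dec functions above. The alt-NAEU clause of Def.~\ref{def:adequete} is engineered precisely for this setup: it supplies the required accept/reject fraction lower bound exactly when the RoughP algorithm in use is the one constructed by lifting $\Sigma^*$-inputs via a preserving-P-isomorphism into a one-larger alphabet. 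The proportional-distribution clause is then what permits the inherited balance to survive when the $\Pi^*$-length class $\mathcal{B}_n^{\phi'}$ is disaggregated, via $\xi^{-1}$, into preimages in $\Sigma^*$ spanning several different $\Sigma^*$-lengths.

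The hard part is the length-rescaling bookkeeping and its interplay with the polynomial growth condition in Eqn.~\ref{eqn:PolyRequirement}. Because $\xi$ re-encodes a base-$|\Sigma|$ representation into a base-$(|\Sigma|+1)$ one, the $\Pi^*$-length index $n$ corresponds to $\Sigma^*$-lengths scaling with $\lambda = 2\log_{|\Sigma|}(|\Sigma|+1)$ -- the constant baked into Def.~\ref{def:adequete}. Setting $\mathrm{Poly}_{\mathcal{H}}(n) := \mathrm{Poly}_{\mathcal{L}}(\lambda n)$ and differentiating $\mathrm{Poly}_{\mathcal{H}}(n)(1/\sqrt{2})^n$ produces the monotonicity condition $\lambda \,\mathrm{Poly}_{\mathcal{L}}'(\lambda n) \leq |\ln(1/\sqrt{2})|\,\mathrm{Poly}_{\mathcal{L}}(\lambda n)$, which is exactly Eqn.~\ref{eqn:PolyRequirement}. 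Thus the adequately-balanced hypothesis gives precisely the monotonicity required of the NAEU polynomial for $\mathcal{H}$, and combining this with the alt-NAEU lower bound and the proportional-distribution clause yields the NAEU property for $\mathcal{H}$ with $\phi'$, $\mathcal{P}'$, and $\mathrm{Poly}_{\mathcal{H}}$, completing the argument.
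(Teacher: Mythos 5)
Your proposal follows essentially the same route as the paper's proof: Lemma~\ref{lem:BasicIsomorpLemma} supplies $\xi$ and $\mathcal{H}$, paddability is transported by conjugating $\textit{Pad}_{\mathcal{L}}$ and $\textit{Dec}_{\mathcal{L}}$ with $\xi$, and the NAEU property is obtained from the conjugated isomorphism, the alt-NAEU and proportional-distribution clauses, and the $\lambda = 2\log_{|\Sigma|}(|\Sigma|+1)$ rescaling feeding Eqn.~\ref{eqn:PolyRequirement}. The ``bookkeeping'' you flag as the hard part is exactly what the paper isolates into Lemma~\ref{lem:RelationTheorem} (the weighted-sum decomposition of $\mathcal{B}_n^{\xi^{-1}\circ\phi\circ\xi}$ over the $\mathcal{B}_j^{\phi}$ with $\chi_{upp}(n)=\mathcal{O}(n)$) and Lemma~\ref{lem:polyBoundedIfPoly}, so your outline matches the paper's argument in both structure and substance.
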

\begin{proof}
Let $\Sigma$ be an alphabet such that $\vert \Sigma \vert$ is odd and let $\mathcal{L} \subseteq \Sigma^*$ be an adequately-balanced language. I then aim to construct the required language (that will be denoted $\mathcal{H}$) over an even-sized alphabet (that will be denoted $\Pi$) that $\mathcal{L}$ is preserving-P-isomorphic to.

Assume, without loss of generality, that $\Sigma = \big \{ 1, 2, ..., \vert \Sigma \vert \big \}$ and define another alphabet, $\Pi$, by:
\begin{align}
    \Pi
    &=
    \Sigma \cup \big \{ \vert \Sigma \vert + 1 \big \}
    =
    \big \{ 1, 2, ..., \vert \Sigma \vert, \vert \Sigma \vert + 1\big \}.
\end{align}
As $\big \vert \Sigma \big \vert $ is, by assumption, odd, $ \big \vert \Pi \big \vert = \big \vert \Sigma \cup \big \{ \Sigma + 1 \big \} \big \vert = \big\vert \Sigma \big \vert + 1$ is even. Due to Lemma~\ref{lem:BasicIsomorpLemma}, a preserving-P-isomorphism, $\xi: \Sigma^* \longrightarrow \Pi^*$, exists between $\mathcal{L}$ and a language, denoted $\mathcal{H}$, over $\Pi$, defined by $ \mathcal{H} = \big \{ \xi(x) \text{ } \vert \text{ } x \in \mathcal{L} \big \}$.

I now aim to show that this preserving-P-isomorphism implies $\mathcal{H} \subseteq \Pi^*$ is also paddable and not-anywhere-exponentially-unbalanced. I address the paddability and not-anywhere-exponentially-unbalanced requirements separately.\\

\noindent \underline{\textbf{$\mathcal{H}$ is paddable}}\\
As $\mathcal{L}$ is assumed to be paddable, padding and decoding functions for $\mathcal{L}$ must exist. Therefore, let:
\begin{enumerate}
    \item $\textit{Pad}_{\mathcal{L}}: \Sigma^* \times \Sigma^* \longrightarrow \Sigma^*$ be the padding function of $\mathcal{L}$. I.e. $\forall x, y, \in \Sigma^*$, $\textit{Pad}_{\mathcal{L}}(x, y) \in \mathcal{L} \iff x \in \mathcal{L}$.
    \item $\textit{Dec}_{\mathcal{L}}: \Sigma^* \longrightarrow \Sigma^*$ be the decoding function of $\mathcal{L}$. I.e. $\forall x, y, \in \Sigma^*$, $\textit{Dec}_{\mathcal{L}} \big( \textit{Pad}_{\mathcal{L}}(x, y) \big) = y$.
\end{enumerate}
Using these padding and decoding functions for $\mathcal{L}$, I propose the following as padding and decoding functions for $\mathcal{H}$: 
\begin{enumerate}
    \item $\textit{Pad}_{\mathcal{H}}: \Pi^* \times \Pi^* \longrightarrow \Pi^*$ is the padding function of $\mathcal{H}$, defined by: $\forall y, z \in \Pi^*$,
    \begin{align}
        \label{eqn:PadHDef}
        \textit{Pad}_{\mathcal{H}} \big( y, z \big)
        &=
        \xi \big[ \textit{Pad}_{\mathcal{L}} \big( \xi^{-1}[y], \xi^{-1}[z] \big) \big].
    \end{align}
    \item $\textit{Dec}_{\mathcal{H}}: \Pi^* \longrightarrow \Pi^*$ be the decoding function of $\mathcal{H}$, defined by: $\forall z \in \Pi^*$,
    \begin{align}
        \label{eqn:DecHDef}
        \textit{Dec}_{\mathcal{H}} \big( z \big)
        &=
        \xi \big[ \textit{Dec}_{\mathcal{L}} \big( \xi^{-1}[z] \big) \big].
    \end{align}
\end{enumerate}
These definitions can also be read off of Fig.~\ref{fig:Lemma4Diagram} by diagram chasing.

I now show that the above proposed padding and decoding functions for $\mathcal{H}$ perform as required. The first property required of the padding function is that it preserves the first argument's membership of $\mathcal{H}$, or lack thereof. To show this consider: $\forall y, z \in \Pi^*$,
\begin{align}
    y \in \mathcal{H}
    &\iff
    \xi^{-1}[y] \in \mathcal{L}, &\text{ as $\xi^{-1}: \Pi^* \longrightarrow \Pi^*$ is a preserving-P-isomorphism}\\
    &\iff
    \forall z \in \Pi^*, \textit{Pad}_{\mathcal{L}} \big( \xi^{-1}[y], \xi^{-1}[z]\big) \in \mathcal{L}, &\text{ by the definition of a padding function}\\ 
    &\iff
     \forall z \in \Pi^*, \xi \big[ \textit{Pad}_{\mathcal{L}} \big( \xi^{-1}[y], \xi^{-1}[z]\big) \big] \in \mathcal{H},
     &\text{ as $\xi: \Pi^* \longrightarrow \Pi^*$ is a preserving-P-isomorphism}\\
     &\iff
     \forall z \in \Pi^*,
     \textit{Pad}_{\mathcal{H}} \big( y, z \big) \in \mathcal{H},  &\text{ by the definition of $\textit{Pad}_{\mathcal{H}}: \Pi^* \times \Pi^* \longrightarrow \Pi^*$}.
\end{align}
I.e. $\textit{Pad}_{\mathcal{H}}$ preserves the membership of $\mathcal{H}$ or $\mathcal{H}^c$ of its first argument.

The other required property of both $\textit{Pad}_{\mathcal{H}}$ and $\textit{Dec}_{\mathcal{H}}$, in tandem, is that $\textit{Dec}_{\mathcal{H}}$ correctly recovers any word in $\Pi^*$ encoded by $\textit{Pad}_{\mathcal{H}}$. This requirement may be expressed alternatively as: $\forall y, z \in \Pi^*$,
\begin{align}
    \textit{Dec}_{\mathcal{H}} \big( \textit{Pad}_{\mathcal{H}} \big(y, z \big) \big) = z.
\end{align}
Using the definitions of $\textit{Pad}_{\mathcal{H}}$ (in Eqn.~\ref{eqn:PadHDef}) and $\textit{Dec}_{\mathcal{H}}$ (in Eqn.~\ref{eqn:DecHDef}), this can be confirmed: $\forall y, z \in \Pi^*$,
\begin{align}
    \textit{Dec}_{\mathcal{H}} \big( \textit{Pad}_{\mathcal{H}} \big(y, z \big) \big) 
    &=
     \xi \big[ \textit{Dec}_{\mathcal{L}} \big( \xi^{-1}\big[\xi \big[ \textit{Pad}_{\mathcal{L}} \big( \xi^{-1}[y], \xi^{-1}[z] \big) \big] \big] \big) \big]
     =
     \xi \big[ \textit{Dec}_{\mathcal{L}} \big(  \textit{Pad}_{\mathcal{L}} \big( \xi^{-1}[y], \xi^{-1}[z] \big) \big) \big]
     =
     \xi \big[ \xi^{-1}[z] \big]
     =
     z.
\end{align}
Similarly to before, the functioning of the padding and decoding functions of $\mathcal{H}$ can also be read off of Fig.~\ref{fig:Lemma4Diagram} by diagram chasing.
\newline
Hence, all required properties of the padding and decoding functions for $\mathcal{H}$ to be paddable are fulfilled.\\

The situation derived in this demonstration of the paddability of $\mathcal{H}$ is depicted in Fig.~\ref{fig:Lemma4Diagram}. In fact, the above demonstration of the paddability of $\mathcal{H}$ implies that the diagram in Fig.~\ref{fig:Lemma4Diagram} commutes.
\newline 

\begin{figure}[h!]
    \centering
    \begin{tikzcd}[scale cd=2.1, row sep=huge,column sep=huge]
                                              &  &                                                                                                                               & {\big\{ 0, 1 \big \}} &                                                                                                                                          &  &                                                   \\
                                              &  &                                                                                                                               &                       &                                                                                                                                          &  &                                                   \\
\Pi^* \arrow[rrruu, "\bold{1}_{\mathcal{H}}"] &  & \Pi^* \arrow[ll, "\textit{Dec}_{\mathcal{H}}"'] \arrow[rr, "\xi^{-1}"] \arrow[ruu, "\bold{1}_{\mathcal{H}}"']                 &                       & \Sigma^* \arrow[rr, "\textit{Dec}_{\mathcal{L}}"] \arrow[ll, "\xi", bend left] \arrow[luu, "\bold{1}_{\mathcal{L}}"]                     &  & \Sigma^* \arrow[llluu, "\bold{1}_{\mathcal{L}}"'] \\
                                              &  &                                                                                                                               &                       &                                                                                                                                          &  &                                                   \\
                                              &  & \Pi^* \times \Pi^* \arrow[rr, "\xi^{-1} \times \xi^{-1}"'] \arrow[uu, "\textit{Pad}_{\mathcal{H}}"] \arrow[lluu, "i^{\pi}_2"] &                       & \Sigma^* \times \Sigma^* \arrow[ll, "\xi \times \xi", bend left] \arrow[uu, "\textit{Pad}_{\mathcal{L}}"'] \arrow[rruu, "i^{\Sigma}_2"'] &  &                                                  
\end{tikzcd}
    \caption{
    Diagram of the situation constructed in Lemma~\ref{lem:PIsoExists}, where $\mathbf{1}_{\mathcal{L}}: \Sigma^* \longrightarrow \{ 0, 1\}$ is the indicator function of $\mathcal{L}$, as in Fig.~\ref{fig:Lemma3Diagram}, and $\mathbf{1}_{\mathcal{H}}: \Sigma^* \longrightarrow \{ 0, 1\}$ is the indicator function of $\mathcal{H}$, as in Fig.~\ref{fig:Lemma3Diagram}. 
    Additionally, $i_2^\Sigma: \Sigma^* \times \Sigma^* \longrightarrow \Sigma^*$ is defined by, $\forall x, y \in \Sigma^*$, $i_2^\Sigma(x,y) = y$, and $i_2^\pi: \Pi^* \times \Pi^* \longrightarrow \Pi^*$ is defined similarly.\\
    Lemma~\ref{lem:PIsoExists} implies this diagram commutes and all mappings -- except the indicator functions -- can be applied in polynomial time.\\
    Note the symmetry of the above diagram, along the center line, between $\Sigma^*$ and $\Pi^*$ and that the left-hand-side of the above diagram follows completely from the right-hand-side, given $\xi$ and $\xi^{-1}$.
    }
    \label{fig:Lemma4Diagram}
\end{figure}
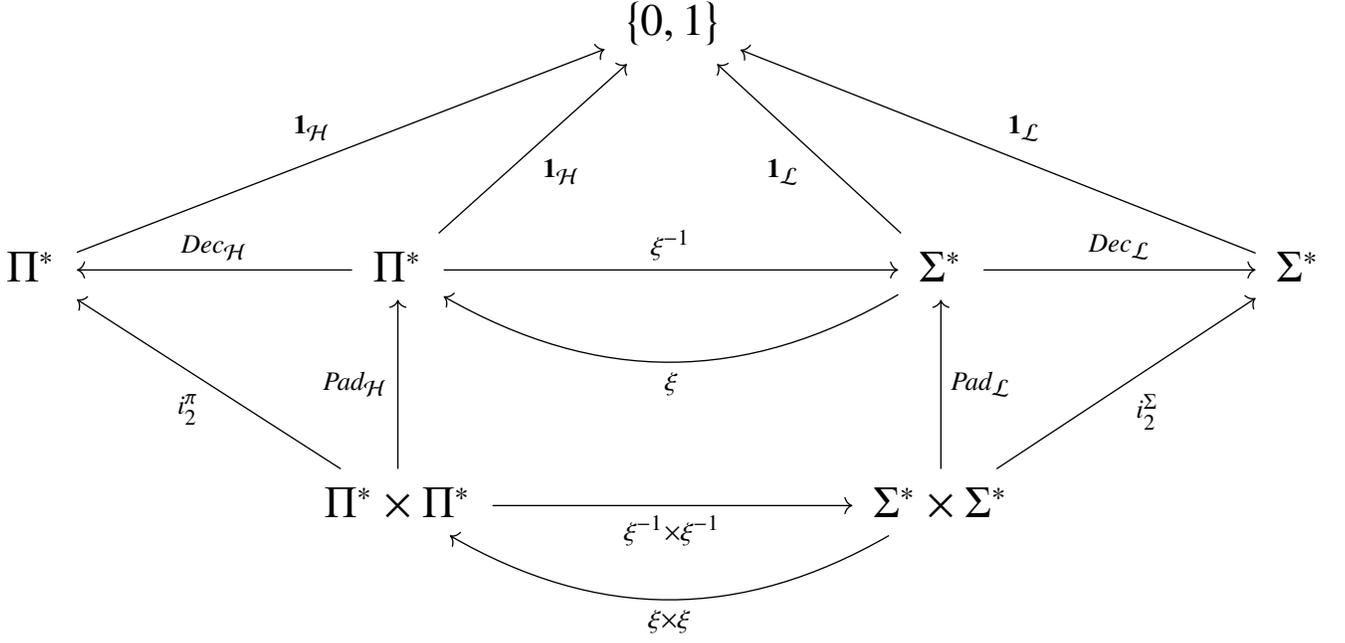

\noindent \underline{\textbf{$\mathcal{H}$ is not-anywhere-exponentially-unbalanced}}\\
I first examine for which function $\psi: \Pi^* \longrightarrow \Pi^*$ I should evaluate the fractions of $\mathcal{B}^{\psi}_n$ both in and not in $\mathcal{H}$ for. I.e. what being not-anywhere-exponentially-unbalanced would really mean for $\mathcal{H}$ and which $\mathcal{B}^{\psi}_n$ it is defined in terms of.

I note that, as $\mathcal{H}$ is has been shown to be paddable (above), Ref.~\cite[Theorem 1]{farago2016roughly} implies $\mathcal{H}$ is in RoughP (as defined in Def.~\ref{def:RoughP}). Therefore, similarly to as in Ref.~\cite{farago2016roughly}, a preserving-P-isomorphism, $\xi^{-1} \circ \phi \circ \xi:$ $\Pi^* \longrightarrow \Pi^*$, as in Def.~\ref{def:RoughP}, exists that maps elements of $\mathcal{H}$ to elements of: 
\begin{align}
    H_{\mathcal{H}}
    &=
    \big \{ x x \text{ } \vert \text{ } x \in \mathcal{H}\big \} \cup \big \{ x \text{ } \vert \text{ } \omega \big( x \big) \text{ is odd}\big \} \subseteq \Pi^*,
\end{align}
(where $\omega: \Pi^* \rightarrow \mathbb{N}_0$ sums the elements of a given string -- assuming they are numerical, which I can assume WLOG) and maps elements of $\mathcal{H}^c$ to elements of $H_{\mathcal{H}}^c$.
This strange notation for the above-mentioned preserving-P-isomorphism is to highlight how $\xi^{-1} \circ \phi \circ \xi:$ $\Pi^* \longrightarrow \Pi^*$ implicitly defines a $\phi:$ $\Sigma^* \longrightarrow \Sigma^*$, as in Fig.~\ref{fig:RelatingPhi} and Fig.~\ref{fig:FinalProofDiagram}, that maps elements of $\mathcal{L}$ to elements of
\begin{align}
    H_{\mathcal{L}}
    &=
    \big \{ x \text{ } \vert \text{ } \phi^{-1} ( x ) \in \mathcal{L} \big \}  \subseteq \Sigma^*,
\end{align}
and maps elements of $\mathcal{L}^c$ to elements of $H_{\mathcal{L}}^c$~\footnote{I note that $\phi^{-1}: \Sigma^* \longrightarrow \Sigma^*$ must exist and be a bijection as $\phi$ is due to $\xi$, $\xi^{-1}$, and $\xi^{-1} \circ \phi \circ \xi$ being bijections}.

To identify the correct $\psi$, to define $\mathcal{H}$ being not-anywhere-exponentially-unbalanced via, consider Fig.~\ref{fig:RelatingPhi} which depicts the -- thus far -- constructed situation, detailing the relationships between various relevant mappings. One of Fig.~\ref{fig:RelatingPhi} lines of symmetry, which runs horizontally, cutting every instance of $\xi$ and $\xi^{-1}$ in Fig.~\ref{fig:RelatingPhi}, shows that $\xi^{-1} \circ \phi \circ \xi$ has the exact same relation to $\textit{Pad}_{\mathcal{H}}$ and $\textit{Dec}_{\mathcal{H}}$ that $\phi$ has to $\textit{Pad}_{\mathcal{L}}$ and $\textit{Dec}_{\mathcal{L}}$~\footnote{This can be shown more formally, but Fig.~\ref{fig:RelatingPhi} suffices as the key point is the symmetry of the relationship.} and, by diagram chasing on Fig.~\ref{fig:RelatingPhi}, $\xi^{-1} \circ \phi \circ \xi$ can be expressed in terms of the padding and decoding functions of $\mathcal{H}$ in the same way that $\phi$ can be expressed in terms of the  padding and decoding functions of $\mathcal{L}$ (directly substituting the padding and decoding functions of the respective languages).

So, I conclude that the correct choice of $\psi$ is $\xi^{-1} \circ \phi \circ \xi$ and therefore the conditions for $\mathcal{H}$ being not-anywhere-exponentially-unbalanced are expressed in terms of $\mathcal{B}^{\xi^{-1} \circ \phi \circ \xi}_n$.

\begin{figure}[h!]
    \begin{tikzcd}[scale cd=1.8, row sep=huge,column sep=huge]
\Pi^* \times \Pi^* \arrow[dd, "\textit{Pad}_{\mathcal{H}}"'] \arrow[rrdd, "i^{\pi}_2"]          &  &                                                                                                                           & {\{0,1\}}  &                                                                                    &  & \Pi^* \times \Pi^* \arrow[lldd, "i^{\pi}_2"'] \arrow[dd, "\textit{Pad}_{H_{\mathcal{H}}}"] \\
                                                                                                &  &                                                                                                                           &            &                                                                                    &  &                                                                                            \\
\Pi^* \arrow[dd, "\xi^{-1}", bend left] \arrow[rr, "\textit{Dec}_{\mathcal{H}}"]                &  & \Pi^* \arrow[dd, "\xi^{-1}", bend left] \arrow[rr, "\xi^{-1} \circ \phi \circ \xi"] \arrow[ruu, "\bold{1}_{\mathcal{H}}"] &            & \Pi^* \arrow[dd, "\xi^{-1}", bend left] \arrow[luu, "\bold{1}_{H_{\mathcal{H}}}"'] &  & \Pi^*  \arrow[dd, "\xi^{-1}", bend left] \arrow[ll, "\textit{Dec}_{H_{\mathcal{H}}}"']     \\
                                                                                                &  &                                                                                                                           &            &                                                                                    &  &                                                                                            \\
\Sigma^* \arrow[uu, "\xi", bend left] \arrow[rr, "\textit{Dec}_{\mathcal{L}}"]                  &  & \Sigma^* \arrow[rr, "\phi"] \arrow[uu, "\xi", bend left] \arrow[rdd, "\bold{1}_{\mathcal{L}}"']                           &            & \Sigma^* \arrow[uu, "\xi", bend left] \arrow[ldd, "\bold{1}_{H_{\mathcal{L}}}"]                  &  & \Sigma^* \arrow[uu, "\xi", bend left] \arrow[ll, "\textit{Dec}_{H_{\mathcal{L}}}"']                      \\
                                                                                                &  &                                                                                                                           &            &                                                                                    &  &                                                                                            \\
\Sigma^* \times \Sigma^* \arrow[uu, "\textit{Pad}_{\mathcal{L}}"] \arrow[rruu, "i_2^{\Sigma}"'] &  &                                                                                                                           & {\{0,1 \}} &                                                                                    &  & \Sigma^* \times \Sigma^* \arrow[uu, "\textit{Pad}_{H_{\mathcal{L}}}"'] \arrow[lluu, "i_2^{\Sigma}"]     
\end{tikzcd}
    \caption{A diagram entirely entailed by the lower half, which follows from both $\mathcal{L}$ and $H$ being paddable and mutually reducible (implying the existence of $\phi$), and $\xi: \Sigma^* \longrightarrow \Pi^*$ being a preserving-P-isomorphism. Note the symmetry between the upper and lower halves. I additionally note that the two instance of $\{ 0, 1 \}$ can be merged into one, keeping all arrows into them the same, and the diagram still commutes (with the horizontal symmetry also maintained).}
    \label{fig:RelatingPhi}
\end{figure}
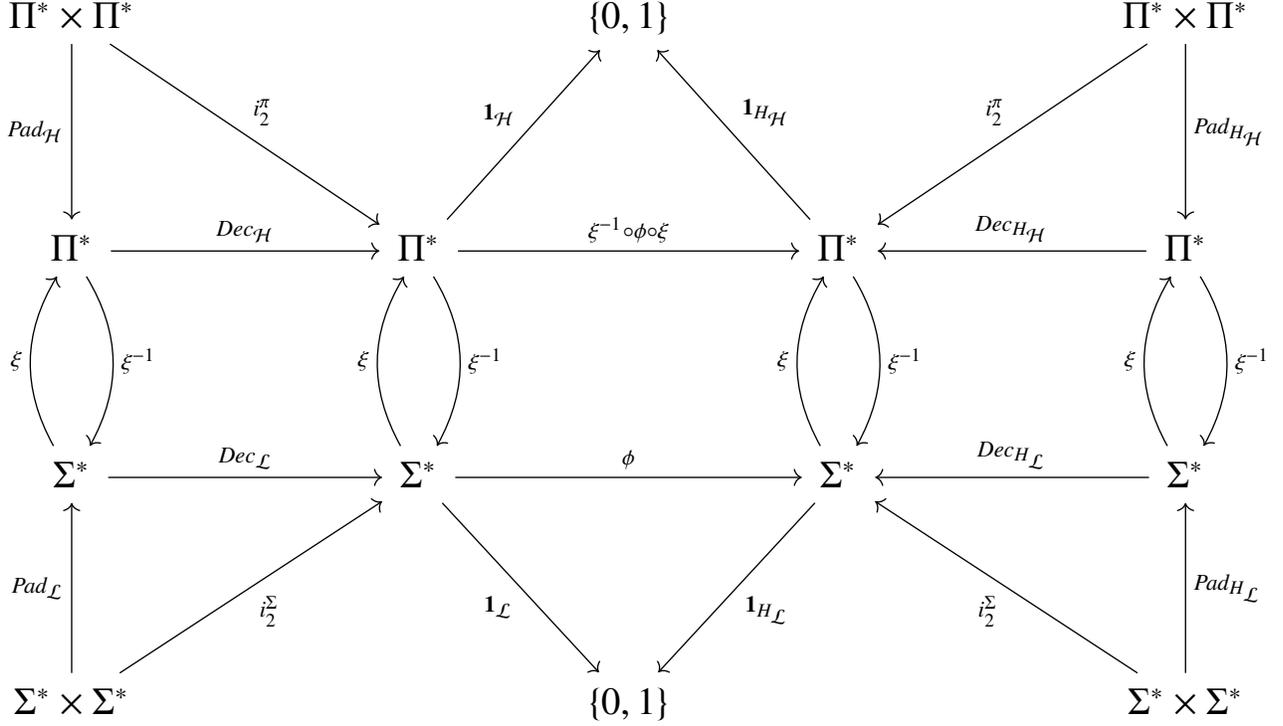

Therefore, to show that $\mathcal{H}$ is not-anywhere-exponentially-unbalanced, the mathematical expression I have to bound from below (by the reciprocal of a  polynomial, $\textit{Poly}: \mathbb{N}_0 \longrightarrow \mathbb{R}$, such that $\textit{Poly}(n) \big( 1 / \sqrt{2} \big)^n$ is monotonically decreasing, as in Def.~\ref{PhiBalanced}), is:
\begin{align}
    \label{eqn:NAEUQuantity}
    \dfrac{\big \vert \mathcal{B}_n^{\xi^{-1} \circ \phi \circ \xi} \cap \mathcal{H} \cap \Xi_{\mathcal{H}} \big \vert} {\big \vert  \mathcal{B}_n^{\xi^{-1} \circ \phi \circ \xi} \big \vert},
\end{align}
where $ \Xi_{\mathcal{H}}$ is the subset of $\mathcal{H}$ the corresponding RoughP algorithm, $\mathcal{P}_{\mathcal{H}}$ (equivalent to the algorithm in Ref.~\cite[Eqn.~4]{farago2016roughly}), decides correctly (i.e. the complement of the set in Eqn.~\ref{Def:RoughPforLProperty}).

I begin this task by, using that $\xi: \Sigma^* \longrightarrow \Pi^*$ preserves the ordering on $\Sigma^*$ following from viewing $\Sigma^*$ as encoding natural numbers (in base $\vert \Sigma \vert$) as in Def.~\ref{def:OmegaSigma} and Def.~\ref{def:alphaPi} -- providing the ordering on $\Pi^*$ that follows from viewing $\Pi^*$ as encoding natural numbers (this time in base $\vert \Pi \vert$)~\footnote{and hence so does $\xi^{-1}: \Pi \longrightarrow \Sigma^*$} -- and how $\mathcal{H}$ was defined to make $\xi$ a preserving-P-isomorphism, I can use the first ``additional property" in Def.~\ref{def:adequete} to derive~\footnote{I note that the summation over multiple $\mathcal{B}_j^{\phi}$ is required because the set of words of $\Pi^*$ of a given length is larger than the set of words of $\Sigma^*$ of the same length (as $\vert \Sigma \vert < \vert \Pi \vert$).}:
\begin{align}
    \label{eqn:noHinBNwithximinusOnephixiAgain}
    \big \vert \mathcal{B}_n^{\xi^{-1} \circ \phi \circ \xi} \cap \mathcal{H} \cap \Xi_{\mathcal{H}} \big \vert
    &=
    \sum_{j = \chi_{low}(n)}^{\chi_{upp}(n)} \bigg( \alpha_j \big \vert \mathcal{B}_j^{\phi} \cap \mathcal{L} \cap \Xi_{\mathcal{L}} \big \vert \bigg),
\end{align}
where $\chi_{low}(n)$ and $\chi_{upp}(n)$ are integer-valued functions of $n$ that need not be defined \emph{yet}, and $\big\{ \alpha_j \in \mathbb{R}\big\}_{j = \chi_{low}(n)}^{\chi_{upp}(n)}$ satisfies:
\begin{align}
    \label{eqn:DefiningPropertyOfalphasSum}
    \sum_{j = \chi_{low}(n)}^{\chi_{upp}(n)} \bigg( \alpha_j \big \vert \mathcal{B}_j^{\phi} \big \vert \bigg) = \big \vert  \mathcal{B}_n^{\xi^{-1} \circ \phi \circ \xi} \big \vert,
\end{align}
such that $\alpha_j = 1$ unless $j = \chi_{low}(n)$ or $\chi_{upp}(n)$, in which case $\alpha_j \leq 1$.
Eqn.~\ref{eqn:noHinBNwithximinusOnephixiAgain} (along with the above mentioned -- in Eqn.~\ref{eqn:DefiningPropertyOfalphasSum} -- properties of $\{ \alpha_j \}^{\chi_{upp}(n)}_{j = {\chi_{low}(n)}}$) is proven in Lemma~\ref{lem:RelationTheorem}.

The first step to bounding the right-hand-side of Eqn.~\ref{eqn:noHinBNwithximinusOnephixiAgain} is to note that $\mathcal{L}$ being adequately-balanced, via the third additional property in Def.~\ref{def:adequete}, implies it is alt-NAEU. $\big( \xi \circ \mathcal{P}_{\mathcal{H}} \big)$ then provides the RoughP algorithm -- that decides $\mathcal{L}$, by using a preserving-P-isomorphism, $\xi$, to map input strings to $\Pi^*$ -- for $\mathcal{L}$ being alt-NAEU to imply that: 
\begin{align}
\label{eqn:UsingNotexpInProofOfNotExp}
\big \vert \mathcal{B}_j^{\phi} \cap \mathcal{L} \cap \Xi_{\mathcal{L}}\big \vert \geq
\dfrac{1}{\textit{Poly}(j) } \big \vert \mathcal{B}_j^{\phi}\big \vert,
\end{align}
for some polynomial, $\textit{Poly}: \mathbb{N}_0 \longrightarrow \mathbb{R}$, meeting the monotonicity requirement. Where $\Xi_{\mathcal{L}}$ is (as shown in Eqn.~\ref{Def:RoughPforLProperty}) both the subset of $\Sigma^*$ that $\mathcal{P}_{\mathcal{L}}$ does not return $\bot$ for and $\xi^{-1} \big( \Xi_{\mathcal{H}} \big)$. Therefore, using Eqn.~\ref{eqn:noHinBNwithximinusOnephixiAgain}:
\begin{align}
    \dfrac{\big \vert \mathcal{B}_n^{\xi^{-1} \circ \phi \circ \xi} \cap \mathcal{H} \cap \Xi_{\mathcal{H}} \big \vert} {\big \vert  \mathcal{B}_n^{\xi^{-1} \circ \phi \circ \xi} \big \vert}
    &=
    \dfrac{1} {\big \vert  \mathcal{B}_n^{\xi^{-1} \circ \phi \circ \xi} \big \vert}
    \sum_{j = \chi_{low}(n)}^{\chi_{upp}(n)} \bigg( \alpha_j \big \vert \mathcal{B}_j^{\phi} \cap \mathcal{L} \cap \Xi_{\mathcal{L}} \big \vert \bigg)
    \geq
    \dfrac{1} {\big \vert  \mathcal{B}_n^{\xi^{-1} \circ \phi \circ \xi} \big \vert}
    \sum_{j = \chi_{low}(n)}^{\chi_{upp}(n)} \bigg( \dfrac{\alpha_j}{\textit{Poly}(j) } \big \vert \mathcal{B}_j^{\phi}\big \vert\bigg) \nonumber\\
    \label{eqn:notanywhereProofOfHBeforeSub}
    &\geq
    \dfrac{1} {\big \vert  \mathcal{B}_n^{\xi^{-1} \circ \phi \circ \xi} \big \vert}
    \sum_{j = \chi_{low}(n)}^{\chi_{upp}(n)} \bigg( \dfrac{\alpha_j}{\textit{Poly}(\chi_{upp}(n)) } \big \vert \mathcal{B}_j^{\phi}\big \vert\bigg).
\end{align}
Eqn.~\ref{eqn:notanywhereProofOfHBeforeSub} then implies that, using Eqn.~\ref{eqn:DefiningPropertyOfalphasSum}:
\begin{align}
    \dfrac{\big \vert \mathcal{B}_n^{\xi^{-1} \circ \phi \circ \xi} \cap \mathcal{H} \cap \Xi_{\mathcal{H}} \big \vert} {\big \vert  \mathcal{B}_n^{\xi^{-1} \circ \phi \circ \xi} \big \vert}
    &\geq
    \dfrac{1}{\textit{Poly}(\chi_{upp}(n)) } \dfrac{1} {\big \vert  \mathcal{B}_n^{\xi^{-1} \circ \phi \circ \xi} \big \vert}
    \sum_{j = \chi_{low}(n)}^{\chi_{upp}(n)} \bigg(\alpha_j \big \vert \mathcal{B}_j^{\phi}\big \vert\bigg)
    =
    \dfrac{1}{\textit{Poly}(\chi_{upp}(n))} \dfrac{\big \vert  \mathcal{B}_n^{\xi^{-1} \circ \phi \circ \xi} \big \vert} {\big \vert  \mathcal{B}_n^{\xi^{-1} \circ \phi \circ \xi} \big \vert}
    =
    \dfrac{1}{\textit{Poly}(\chi_{upp}(n))}.
\end{align}
To progress any further, I first need to bound (from above) $\chi_{upp}(n)$. This is done in Lemma~\ref{lem:RelationTheorem} and it shows that $\chi_{upp}(n) = \mathcal{O} \big( n \big)$.

Therefore, $\textit{Poly}(\chi_{upp}(n))$ can be viewed as a polynomial in $n$, which I refer to as $\textit{Poly}': \mathbb{N}_0 \longrightarrow \mathbb{R}$. This then implies that there exists a polynomial, $\textit{Poly}': \mathbb{N}_0 \longrightarrow \mathbb{R}$, such that:
\begin{align}
    \dfrac{\big \vert \mathcal{B}_n^{\xi^{-1} \circ \phi \circ \xi} \cap \mathcal{H} \cap \Xi_{\mathcal{H}}\big \vert} {\big \vert  \mathcal{B}_n^{\xi^{-1} \circ \phi \circ \xi} \big \vert}
    \geq
    \dfrac{1}{\textit{Poly}'(n)}.
\end{align}
The second additional property in Def.~\ref{def:adequete} ensures that $\textit{Poly}'(n) \big( 1 / \sqrt{2} \big)^n$ is also monotonically decreasing, as required. This is shown in Lemma~\ref{lem:polyBoundedIfPoly}, which is presented immediately after Lemma~\ref{lem:RelationTheorem}, after the end of this proof.

As $\mathcal{L}^c$ is also adequately-balanced if $\mathcal{L}$ is (as the definition of being adequately-balanced -- in Def.~\ref{def:adequete} -- is unaffected by interchanging $\mathcal{L}$ and $\mathcal{L}^c$), the same argument as above implies that:
there exists a polynomial, $\textit{Poly}'': \mathbb{N}_0 \longrightarrow \mathbb{R}$, such that:
\begin{align}
    \dfrac{\big \vert \mathcal{B}_n^{\xi^{-1} \circ \phi \circ \xi} \cap \mathcal{H}^c \cap \Xi_{\mathcal{H}^c} \big \vert} {\big \vert  \mathcal{B}_n^{\xi^{-1} \circ \phi \circ \xi} \big \vert}
    \geq
    \dfrac{1}{\textit{Poly}''(n)},
\end{align}
and the second additional property in Def.~\ref{def:adequete} again ensures that $\textit{Poly}''(n) \big( 1 / \sqrt{2} \big)^n$ is also monotonically decreasing.
\newline

\noindent So $\mathcal{H}$ is not-anywhere-exponentially-unbalanced.
\newline

\noindent \underline{\textbf{Conclusion of the Proof}}\\
The language $\mathcal{H}$ has been shown to be paddable, not-anywhere-exponentially-unbalanced, and preserving-P-isomorphic to the language $\mathcal{L} \subseteq \Sigma^*$.

As the choice of $\mathcal{L}$ and $\Sigma$ was completely arbitrary, this shows that any adequately-balanced language over an odd-sized alphabet is preserving-P-isomorphic to a paddable not-anywhere-exponentially-unbalanced language over an even-sized alphabet. I.e. for any $\mathcal{L} \subseteq \Sigma^*$, a $\mathcal{H} \subseteq \Pi^*$, as defined above, exists. 
\end{proof}
\subsection{Proving Lemma~\ref{lem:PisoImpliesPhase}}
\begin{lemma}
\label{lem:RelationTheorem}
If $\xi, \phi, \mathcal{L}$, and $\mathcal{H}$ are as in the proof of Lemma~\ref{lem:PIsoExists}, $\forall n \in \mathbb{N}_0$, there exists $\chi_{low}(n), \chi_{upp}(n) \in \mathbb{N}_0$ such that:
    \begin{align}
    \big \vert \mathcal{B}_n^{\xi^{-1} \circ \phi \circ \xi} \cap \mathcal{H} \cap \Xi_{\mathcal{H}} \big \vert
    &=
    \sum_{j = \chi_{low}(n)}^{\chi_{upp}(n)} \bigg( \alpha_j \big \vert \mathcal{B}_j^{\phi} \cap \mathcal{L} \cap \Xi_{\mathcal{L}} \big \vert \bigg),
\end{align}
where $\big\{ \alpha_j \in \mathbb{R}^+ \big\}_{j = \chi_{low}(n)}^{\chi_{upp}(n)}$ satisfies
    $\sum_{j = \chi_{low}(n)}^{\chi_{upp}(n)} \bigg( \alpha_j \big \vert \mathcal{B}_j^{\phi} \big \vert \bigg) = \big \vert  \mathcal{B}_n^{\xi^{-1} \circ \phi \circ \xi} \big \vert$ with $\alpha_j = 1$ unless $j = \chi_{low}(n)$ or $\chi_{upp}(n)$, in which case $\alpha_j \leq 1$; and $ \Xi_{\mathcal{H}}$ is the subset of $\Pi^*$ the corresponding RoughP algorithm that decides $\mathcal{H}$, $\mathcal{P}_{\mathcal{H}}$, decides correctly (i.e. does not return $\bot$ for).

Furthermore, $\chi_{upp}(n)
    \leq
    2n \log_{\vert \Sigma \vert} \bigg( \vert \Sigma \vert + 1 \bigg) = \mathcal{O} \big( n \big)$.
\end{lemma}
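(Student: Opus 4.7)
The plan is to stratify $\mathcal{B}_n^{\xi^{-1}\circ\phi\circ\xi}$ by the length of its image under the inner piece of the composition, exploit the base-conversion structure of $\xi$, and then transfer counting identities to $\Sigma^*$ using that $\xi$ is a preserving P-isomorphism and $\phi$ is bijective. First I would set $V_n:=\{u\in\Sigma^*:|\xi(u)|=n\}$; because $\xi$ and $\phi$ are bijective and $\xi$ preserves both language membership ($\mathcal{H}=\xi(\mathcal{L})$) and the RoughP-decided set ($\Xi_{\mathcal{H}}=\xi(\Xi_{\mathcal{L}})$), this gives
\[ \mathcal{B}_n^{\xi^{-1}\circ\phi\circ\xi}\cap\mathcal{H}\cap\Xi_{\mathcal{H}}=\xi\bigl(\phi^{-1}(V_n)\cap\mathcal{L}\cap\Xi_{\mathcal{L}}\bigr), \]
with equal cardinality to the $\Sigma^*$-side. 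Partitioning $V_n=\bigsqcup_j V_n^{(j)}$ with $V_n^{(j)}:=V_n\cap\{u:|u|=j\}$ then gives $\phi^{-1}(V_n^{(j)})\subseteq\mathcal{B}_j^{\phi}$ and reduces the problem to a length-by-length count.

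Next I would use the base-conversion nature of $\xi$ (via $\theta_{\Sigma}$ and $\alpha_{\Pi}$): $|\xi(u)|=n$ exactly when $\theta_{\Sigma}(u)$ lies in the integer interval representable in exactly $n$ digits of base $|\Pi|=|\Sigma|+1$, while for $|u|=j$ the values $\theta_{\Sigma}(u)$ fill an interval of size $|\Sigma|^j$. Intersecting the two intervals produces cut-off lengths $\chi_{low}(n)\le\chi_{upp}(n)$: for $\chi_{low}(n)<j<\chi_{upp}(n)$ the whole length-$j$ shell is contained in $V_n$, so $\phi^{-1}(V_n^{(j)})=\mathcal{B}_j^{\phi}$; for $j\in\{\chi_{low}(n),\chi_{upp}(n)\}$ only a proper part of the shell contributes. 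Setting $\alpha_j:=|\phi^{-1}(V_n^{(j)})|/|\mathcal{B}_j^{\phi}|$ then gives $\alpha_j=1$ in the interior, $\alpha_j\in(0,1]$ at the endpoints, and
\[ \sum_{j=\chi_{low}(n)}^{\chi_{upp}(n)}\alpha_j\,|\mathcal{B}_j^{\phi}|=|\phi^{-1}(V_n)|=|V_n|=|\mathcal{B}_n^{\xi^{-1}\circ\phi\circ\xi}|, \]
by bijectivity of $\phi$ and $\xi$.

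To promote these same $\alpha_j$ to the sets cut down by $\mathcal{L}\cap\Xi_{\mathcal{L}}$, I would invoke the first additional requirement of Def.~\ref{def:adequete}: within each $\mathcal{B}_j^{\phi}$, the correctly-decided accepting (and likewise rejecting) elements are distributed in the same ratio as $\mathcal{B}_j^{\phi}$ itself across the pieces indexed by which $\xi^{-1}(\mathcal{B}_m^{\xi^{-1}\circ\phi\circ\xi})$ they belong to. Those pieces are exactly the $\phi^{-1}(V_m^{(j)})$ as $m$ varies, so the requirement delivers
\[ \bigl|\phi^{-1}(V_n^{(j)})\cap\mathcal{L}\cap\Xi_{\mathcal{L}}\bigr|=\alpha_j\,\bigl|\mathcal{B}_j^{\phi}\cap\mathcal{L}\cap\Xi_{\mathcal{L}}\bigr|, \]
and summing over $\chi_{low}(n)\le j\le\chi_{upp}(n)$ yields the main identity.

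Finally, $\chi_{upp}(n)$ is by construction the largest $j$ whose $\theta_{\Sigma}$-interval meets the $n$-digit base-$|\Pi|$ range, so it is at most one plus the largest $j$ with $|\Sigma|^{j-1}<(|\Sigma|+1)^n$, i.e.\ $\chi_{upp}(n)\le n\log_{|\Sigma|}(|\Sigma|+1)+1$; absorbing the $+1$ together with the off-by-one coming from the alphabets being $\{1,\ldots,|\Sigma|\}$ rather than $\{0,\ldots,|\Sigma|-1\}$ into a factor of $2$ yields the claimed $\chi_{upp}(n)\le 2n\log_{|\Sigma|}(|\Sigma|+1)=\mathcal{O}(n)$. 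The hard part will be the proportionality step: Def.~\ref{def:adequete}'s first additional requirement is phrased informally, so the real work is verifying that its content really is the per-shell identity above, applied simultaneously to the $\mathcal{L}$- and $\Xi_{\mathcal{L}}$-intersected sets; once that is pinned down, the remainder is routine base-change bookkeeping.
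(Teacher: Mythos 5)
Your proposal is correct and follows essentially the same route as the paper: decompose $\mathcal{B}_n^{\xi^{-1}\circ\phi\circ\xi}$ into whole interior shells $\mathcal{B}_j^{\phi}$ plus two partial boundary shells using the fact that $\xi$ preserves the integer encodings (hence orderings) of $\Sigma^*$ and $\Pi^*$, invoke the first additional requirement of Def.~\ref{def:adequete} to transfer the shell weights $\alpha_j$ to the sets intersected with $\mathcal{L}\cap\Xi_{\mathcal{L}}$, and bound $\chi_{upp}(n)$ by comparing digit lengths across bases $\vert\Sigma\vert$ and $\vert\Sigma\vert+1$. Your explicit $V_n$/$\alpha_j$ packaging is a slightly cleaner formulation of the paper's contiguity-plus-induction argument, and you correctly identify the proportional-splitting assumption as the step carrying the real weight, exactly as the paper does.
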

\begin{proof}
    Let $\theta_{\Sigma}$ and $\theta_{\Pi}$ retain their meaning from the proof of Lemma~\ref{lem:BasicIsomorpLemma}, and let $x, y \in \Sigma^*$ such that:
    \begin{align}
        \label{eqn:firstPlusOneEqn}
        \theta_{\Sigma} \big( \phi (x) \big)
        =
        \theta_{\Sigma} \big( \phi (y) \big) + 1.
    \end{align}
    For any $y \in \Sigma^*$, there exists a $x \in \Sigma^*$ that satisfies Eqn.~\ref{eqn:firstPlusOneEqn}, and for any $x \in \Sigma^*$ such that $\theta_{\Sigma} \big( \phi \big( x \big) \big) \not = 0$, there exists a $y \in \Sigma^*$ that satisfies Eqn.~\ref{eqn:firstPlusOneEqn}.
    By diagram chasing in Fig.~\ref{fig:FinalProofDiagram}, $\phi \circ \theta_{\Sigma}$ may be re-expressed as:
    \begin{align}
        \phi \circ \theta_{\Sigma} 
        &=
        \xi \circ \big( \xi^{-1} \circ \phi \circ \xi \big) \circ \theta_{\Pi}
        =
         \phi \circ \xi \circ \theta_{\Pi}.
    \end{align}
    Therefore, letting $x_0, y_0 \in \Pi^*$ such that $x = \xi^{-1} \big( x_0 \big)$ and $y = \xi^{-1} \big( y_0 \big)$ (as $\xi$ is bijective, such $x_0$ and $y_0$ always exist), Eqn.~\ref{eqn:firstPlusOneEqn} implies that:
    \begin{align}
        \label{def:beforeToExplain}
        \theta_{\Pi} \big( \xi \big( \phi \big( \xi^{-1} \big( x_0 \big) \big) \big) \big)
        =
        \theta_{\Pi} \big( \xi \big( \phi \big( \xi^{-1} \big( y_0 \big) \big) \big) \big) + 1.
    \end{align}
    Eqn.~\ref{def:beforeToExplain} may be expressed alternatively as:
    \begin{align}
        \label{eqn:subsetContainmentofB}
        \big( \xi^{-1} \circ \phi \circ \xi \circ \theta_{\Pi} \big) \big( x_0\big)
        &=
        \big( \xi^{-1} \circ \phi \circ \xi \circ \theta_{\Pi} \big) \big( y_0\big) + 1.
    \end{align}
    Eqn.~\ref{eqn:subsetContainmentofB} following from Eqn.~\ref{eqn:firstPlusOneEqn} is interpreted, informally, as, for any $\mathcal{S} \subseteq \Sigma^*$ such that $\phi$ maps them to the subset of $\Sigma^*$ encoding exactly the integers in the a certain range (e.g. $\mathcal{B}^{\phi}_n$, for any $n \in \mathbb{N}_0$), $\xi \big( \mathcal{S} \big) \subseteq \Pi^*$ is mapped to the subset of $\Pi^*$ encoding exactly the integers in the same range by $\xi^{-1} \circ \phi \circ \xi$   
    (e.g. a contiguous subset of $\mathcal{B}^{\xi^{-1} \circ \phi \circ \xi}_n$, for a particular $n \in \mathbb{N}_0$)~\footnote{Also note that the argument above goes through if reversed.}.
    The most important instances of these relations are the mentioned examples.
    \newline
    
    Using the above argument, as:
    \begin{enumerate}
        \item the elements of $\mathcal{B}^{\phi}_n$ after $\phi$ acts on them are contiguous~\footnote{By which I mean there are not elements not in that set between elements of that set.} -- according to the ordering that follows from $\theta_{\Sigma}$ -- and are sandwiched between the elements of $\mathcal{B}^{\phi}_{n-1}$ (on one side) and $\mathcal{B}^{\phi}_{n+1}$ (on the other);
        \item the elements of $\mathcal{B}^{\xi^{-1} \circ \phi \circ \xi}_n$ after $\xi^{-1} \circ \phi \circ \xi$ acts on them are also contiguous, this time according to the ordering that follows from $\theta_{\Pi}$;
        \item $\xi$ and $\xi^{-1}$ preserve the orderings of $\Sigma^*$  according to $\theta_{\Sigma}$ and  the orderings of $\Pi^*$ according to $\theta_{\Pi}$, respectively~\footnote{As they are designed to encode exactly the same numbers before and after $\xi$ or $\xi^{-1}$ act on them, so $\xi$ and $\xi^{-1}$ preserve the ordering inherited from the encoding of the numbers.};
        \end{enumerate}
        there exists $\chi_{low}(n), \chi_{upp}(n) \in \mathbb{N}_0$ such that:
    \begin{align}
        \label{eqn:FirstRelationshipEquation}
        \xi^{-1} \bigg( \mathcal{B}^{\xi^{-1} \circ \phi \circ \xi}_n \bigg) \subseteq \bigcup_{k = \chi_{low}(n)}^{\chi_{upp}(n) } \bigg( \mathcal{B}^{\phi}_k \bigg)
        \hspace{0.6cm}
        \textit{ and }
        \hspace{0.6cm}
        \xi \bigg( \bigcup_{k = \chi_{low}(n)}^{\chi_{upp}(n) } \bigg( \mathcal{B}^{\phi}_k \bigg) \bigg) \subseteq \mathcal{B}^{\xi^{-1} \circ \phi \circ \xi}_n.
    \end{align}
    Switching focus momentarily, to refine the relationship in Eqn.~\ref{eqn:FirstRelationshipEquation}, let $e_0 \in \mathcal{B}^{\phi}_0$. This is equivalent, given $e_0 \in \Sigma^*$, to there not existing a $y_0 \in \Sigma^*$ such that:
    \begin{align}
        \label{eqn:qwerty}
        \theta_{\Sigma} \big( \phi \big( y_0 \big) \big) = \theta_{\Sigma} \big( \phi \big( e_0 \big) \big) - 1.
    \end{align}
    By diagram chasing on Fig.~\ref{fig:FinalProofDiagram}, Eqn.~\ref{eqn:qwerty} can be seen to be equivalent to there not existing a $y_0 \in \Sigma^*$ such that:
    \begin{align}
        \label{eqn:qwertyTwo}
        \theta_{\Pi} \big( \xi \big( \phi \big( y_0 \big) \big) \big) = \theta_{\Pi} \big( \xi \big( \phi \big( e_0 \big) \big) \big) - 1.
    \end{align}
    Such a $y_0 \in \Sigma^*$ satisfying Eqn.~\ref{eqn:qwertyTwo} would be implied to exist by there existing $y_0', e_0' \in \Pi^*$ such that:
    $y_0 = \xi^{-1}\big( y_0' \big)$, $e_0 = \xi^{-1}\big( e'_0\big)$, and:
    \begin{align}
        \big( \xi^{-1} \circ \phi \circ \xi \circ \theta_{\Pi} \big)\big( y_0' \big)
        =
        \big( \xi^{-1} \circ \phi \circ \xi \circ \theta_{\Pi} \big)\big( e_0' \big) - 1.
    \end{align}
    So no such $y_0', e_0' \in \Pi^*$ can exist  (assuming $e_0 \in \mathcal{B}^{\phi}_0$). This is only possible if $\big( \xi^{-1} \circ \phi \circ \xi \circ \theta_{\Pi} \big)\big( e_0' \big)$ is the empty word in $\Pi^*$. Therefore, $e_0' \in \mathcal{B}_0^{\xi^{-1} \circ \phi \circ \xi}$.

    In fact, as $\big \vert \mathcal{B}_0^{\xi^{-1} \circ \phi \circ \xi} \big \vert = \big \vert \mathcal{B}_0^{ \phi} \big \vert = 1$, and $e_0 = \xi^{-1} \big( e'_0 \big)$, the non-existence of a $y_0', e_0' \in \Pi^*$ as required implies that:
    \begin{align}
        \label{eqn:Aligning}
        \xi^{-1} \bigg( \mathcal{B}_0^{\xi^{-1} \circ \phi \circ \xi} \bigg)
        =
        \mathcal{B}_0^{\phi}.
    \end{align}
    As Eqn.~\ref{eqn:FirstRelationshipEquation} holds for all values of $n \in \mathbb{N}_0$, Eqn.~\ref{eqn:Aligning} serves as a base case which allows me to inductively conclude that the relationship between $\big \{ \mathcal{B}_k^{\xi^{-1} \circ \phi \circ \xi} \big \}_{k = 0}^{\infty}$ and $\big \{ \mathcal{B}_k^{\phi} \big \}_{k = 0}^{\infty}$ is almost entirely based on the respective sizes of the $\mathcal{B}_k^{\phi}$ and the $\mathcal{B}_k^{\xi^{-1} \circ \phi \circ \xi}$. As the ordering of the domain and range of $\xi$ is preserved by it, due to the numbers encoded being preserved by $\xi$ (and the number obviously implying an ordering); as encodings of neighboring numbers have either the same length or lengths that differ by one (which implies Eqn.~\ref{eqn:FirstRelationshipEquation}), Eqn.~\ref{eqn:Aligning} provides a base and then $\Sigma^*$ and $\Pi^*$ can each be placed, in -- an already discussed -- order, into their respective hierarchies of $\mathcal{B}_k^{\phi}$ and the $\mathcal{B}_k^{\xi^{-1} \circ \phi \circ \xi}$, filling each set in the respective hierarchies up progressively. So, deciding which $\mathcal{B}_k^{\xi^{-1} \circ \phi \circ \xi}$ an element of any $\mathcal{B}_k^{\phi}$ is in becomes a matter of simply expressing the encoded number in base $\vert \Pi \vert$ and counting its length (which gives you the required $k \in \mathbb{N_0}$).
    \newline

    In a brief interlude that will be required imminently, consider how Ref.~\cite[Theorem 1]{farago2016roughly} implies a RoughP algorithm exists to decide $\mathcal{H}$ (as it is paddable and hence in RoughP), $\mathcal{P}_{\mathcal{H}}$, which, via $\xi$ being a preserving-P-isomorphism, implies a RoughP algorithm to decide $\mathcal{L}$, $\mathcal{P}_{\mathcal{L}}$, defined by: $\forall x \in \Sigma^*$,
\begin{align}
    \label{Def:RoughPforL}
    \mathcal{P}_{\mathcal{L}}  \big( x \big)  
    &=
    \mathcal{P}_{\mathcal{H}} \big( \xi (x) \big)
    =
    (\xi \circ  \mathcal{P}_{\mathcal{H}}) (x). 
\end{align}
Therefore, 
\begin{align}
    \label{Def:RoughPforLProperty}
    \xi \big [ \big \{ x \in \Sigma^* \text{ } \big \vert \text{ }  \mathcal{P}_{\mathcal{L}}\big( x \big) = \perp \big \} \big]
    &=
    \big \{ \xi \big [x \big] \in \Pi^* \text{ } \big \vert \text{ }  \mathcal{P}_{\mathcal{H}} \big( \xi [x] \big) = \perp \big \}
    =
    \big \{ y \in \Pi^* \text{ } \big \vert \text{ } 
    \mathcal{P}_{\mathcal{H}} \big( \xi \big [ \xi^{-1}[y] \big] \big) = \perp \big \}
     =
    \big \{ y \in \Pi^* \text{ } \big \vert \text{ } 
    \mathcal{P}_{\mathcal{H}} \big( y \big) = \perp \big \}.
\end{align}
That is, $\xi: \Sigma^* \longrightarrow \Pi^*$ maps the elements of $\Sigma^*$ that $\mathcal{P}_{\mathcal{L}}$ cannot decide to elements of $\Pi^*$ that $\mathcal{P}_{\mathcal{H}}$ cannot decide while preserving membership of the respective languages. Or, expressed alternatively: $\Xi_{\mathcal{L}} = \xi^{-1} \big( \Xi_{\mathcal{H}} \big)$ and $\Xi_{\mathcal{H}} = \xi \big( \Xi_{\mathcal{L}} \big)$. 

    Returning to the main thread of this proof:
    Eqn.~\ref{eqn:FirstRelationshipEquation}, Eqn.~\ref{eqn:Aligning}, Eqn.~\ref{Def:RoughPforL}, Eqn.~\ref{Def:RoughPforLProperty}, and the first additional requirement of Def.~\ref{def:adequete} imply that (as each of the sets that intersect to form $\mathcal{B}_n^{\xi^{-1} \circ \phi \circ \xi} \cap \mathcal{H} \cap \Xi_{\mathcal{H}}$ are mapped to their equivalent for $\mathcal{L}$ by $\xi^{-1}: \Pi^* \longrightarrow \Sigma^*$):
    \begin{align}
    \label{eqn:noHinBNwithximinusOnephixiAgainTwo}
    \big \vert \mathcal{B}_n^{\xi^{-1} \circ \phi \circ \xi} \cap \mathcal{H} \cap \Xi_{\mathcal{H}} \big \vert
    &=
    \sum_{j = \chi_{low}(n)}^{\chi_{upp}(n)} \bigg( \alpha_j \big \vert \mathcal{B}_j^{\phi} \cap \mathcal{L} \cap \Xi_{\mathcal{L}} \big \vert \bigg),
    \end{align}
    where $\big\{ \alpha_j \in \mathbb{R}^+ \big\}_{j = \chi_{low}(n)}^{\chi_{upp}(n)}$ satisfies:
\begin{align}
    \sum_{j = \chi_{low}(n)}^{\chi_{upp}(n)} \bigg( \alpha_j \big \vert \mathcal{B}_j^{\phi} \big \vert \bigg) = \big \vert  \mathcal{B}_n^{\xi^{-1} \circ \phi \circ \xi} \big \vert,
\end{align}
with the additional condition that: $\alpha_j = 1$ unless $j = \chi_{low}(n)$ or $\chi_{upp}(n)$, in which case $\alpha_j \leq 1$.
\newline

I now turn to consider the last claim of the lemma statement, concerning $\chi_{upp}(n)$.
\newline

$\chi_{upp}(n)$, as in Eqn.~\ref{eqn:noHinBNwithximinusOnephixiAgainTwo}, can be calculated using the relationship captured by Eqn.~\ref{eqn:subsetContainmentofB} and Eqn.~\ref{eqn:Aligning}, i.e. the ``filling up" relationship.
As Eqn.~\ref{eqn:noHinBNwithximinusOnephixiAgainTwo} was almost entirely determined by how $\xi: \Sigma^* \longrightarrow \Pi^*$ was constructed and then how $\mathcal{H}$ was defined to make $\xi$ a preserving-P-isomorphism, this is akin to an analysis of the length of a word from an alphabet of size $\vert \Sigma \vert \in \mathbb{N}$ needed to encode the the largest number representable by a length $n \in \mathbb{N}_0$ word from an alphabet with size $\vert \Sigma \vert + 1 \in \mathbb{N}$:  
\begin{align}
    \chi_{upp}(n)
    &=
    \bigg \lfloor \log_{\vert \Sigma \vert} \bigg( (n - 1) \sum^{n-1}_{j = 0} \bigg[ \big( \vert \Sigma \vert + 1 \big)^j \bigg] \bigg) \bigg \rfloor + 1
    =
    \bigg \lfloor \dfrac{\ln{\big( n - 1 \big)} + \ln{\big(  \big( \vert \Sigma \vert + 1 \big)^n - 1 \big)}- \ln{ \big( \vert \Sigma \vert \big)}}{\ln{\big( \vert \Sigma \vert \big)}} \bigg \rfloor + 1\\
    &=
    \bigg \lfloor \dfrac{\ln{\big( n - 1 \big)} + \ln{\big(  \big( \vert \Sigma \vert + 1 \big)^n - 1 \big)}}{\ln{\big( \vert \Sigma \vert \big)}} - 1 \bigg \rfloor + 1
    \leq
    \dfrac{\ln{\big( n - 1 \big)} + \ln{\big(  \big( \vert \Sigma \vert + 1 \big)^n \big)}}{\ln{\big( \vert \Sigma \vert \big)}}
    \leq
    2n \dfrac{\ln{\big( \vert \Sigma \vert + 1 \big)}}{\ln{\big( \vert \Sigma \vert \big)}}
    =
    2n \log_{\vert \Sigma \vert} \bigg( \vert \Sigma \vert + 1 \bigg).
\end{align}
$\chi_{low}(n)$ can be calculated similarly but this is not required. However, $\chi_{low}(n) = \mathcal{O} \big( n \big)$ and $\chi_{low}(n) = \Omega \big( n \big)$.
\end{proof}

\begin{figure}[h!]
    \centering

    \begin{tikzcd}[scale cd=1.9, row sep=huge,column sep=huge]
\mathcal{L}^c \arrow[rr, "\xi", bend left] \arrow[dd, "\phi"']                 &              & \mathcal{H}^c \arrow[ll, "\xi^{-1}", bend left] \arrow[dd, "\xi^{-1} \circ \phi \circ \xi"]  \\
                                                                             &              &                                                                                            \\
H_{\mathcal{L}}^c \arrow[rr, "\xi", bend left] \arrow[rdd, "\theta_{\Sigma}"'] &              & H_{\mathcal{H}}^c \arrow[ll, "\xi^{-1}", bend left] \arrow[ldd, "\theta_{\Pi}"]              \\
                                                                             &              &                                                                                            \\
                                                                             & \mathbb{N}_0 &                                                                                            \\
                                                                             &              &                                                                                            \\
H_{\mathcal{L}} \arrow[rr, "\xi", bend left] \arrow[ruu, "\theta_{\Sigma}"]  &              & H_{\mathcal{H}} \arrow[ll, "\xi^{-1}", bend left] \arrow[luu, "\theta_{\Pi}"']             \\
                                                                             &              &                                                                                            \\
\mathcal{L} \arrow[rr, "\xi", bend left] \arrow[uu, "\phi"]                  &              & \mathcal{H} \arrow[ll, "\xi^{-1}", bend left] \arrow[uu, "\xi^{-1} \circ \phi \circ \xi"']
\end{tikzcd}
\caption{\label{fig:FinalProofDiagram}
Diagram used in the proof of Lemma~\ref{lem:RelationTheorem}.
Note that $H_{\mathcal{L}}$ is defined from $H_{\mathcal{H}}$ (which is defined from $\mathcal{H}$ by the method in Ref.~\cite{farago2016roughly}) by the commutativity of this diagram.}
\end{figure}
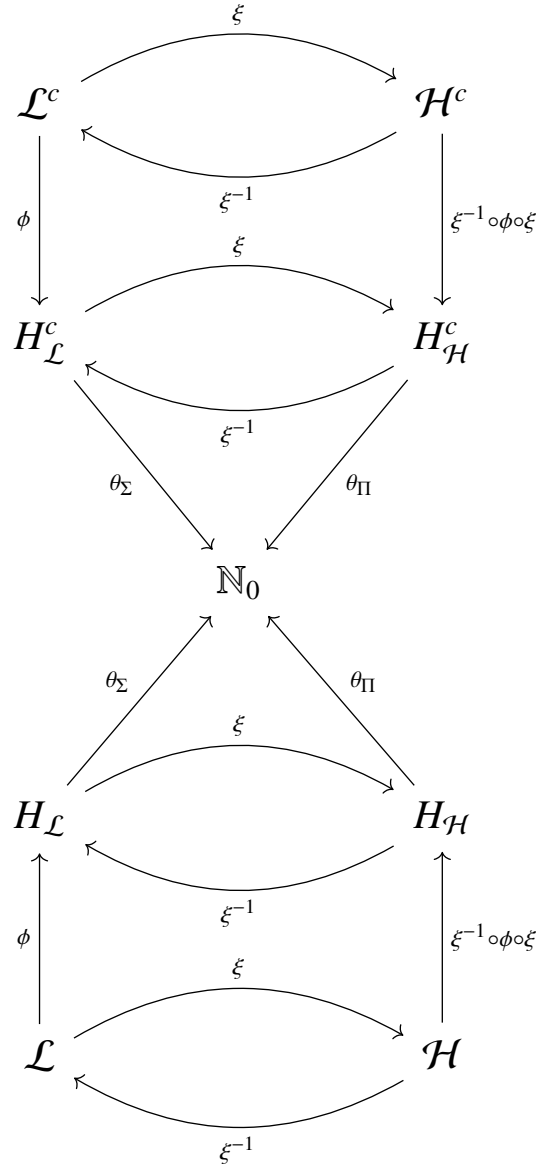

\begin{lemma}
    \label{lem:polyBoundedIfPoly}
    If $\textit{Poly(n)} \big( 1 / \sqrt{2} \big)^n$ is monotonically decreasing and, $\forall n \in \mathbb{N}_0$,
    \begin{align}
        \dfrac{\partial}{ \partial x} \bigg( \textit{Poly}(x) \bigg) \bigg \vert_{ x = 2n \log_{\vert \Sigma \vert} \big( \vert \Sigma \vert + 1 \big)}
        \leq
        \dfrac{\big \vert \ln{\big( 1 / \sqrt{2} \big)} \big \vert}{2 \log_{\vert \Sigma \vert} \big( \vert \Sigma \vert + 1 \big)} \textit{Poly} \big( 2n \log_{\vert \Sigma \vert} \big( \vert \Sigma \vert + 1 \big) \big),
    \end{align}
    then
    $\textit{Poly} \big( 2n \log_{\vert \Sigma \vert} \big( \vert \Sigma \vert + 1 \big) \big) \big( 1 / \sqrt{2} \big)^n$ is monotonically decreasing.
\end{lemma}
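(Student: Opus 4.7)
The plan is to recognise the derivative inequality of the hypothesis as exactly what the product rule produces when differentiating the continuous analogue $F(x) := \textit{Poly}(x)\big(1/\sqrt{2}\big)^{x/\lambda}$, writing $\lambda = 2\log_{\vert\Sigma\vert}\big(\vert\Sigma\vert+1\big)$ for brevity. Since $F(\lambda n) = \textit{Poly}(\lambda n)\big(1/\sqrt{2}\big)^n$, showing that $F$ is non-increasing between consecutive points $\lambda n$ and $\lambda(n+1)$ is precisely the desired conclusion.

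First I would compute, by the product rule,
\begin{align}
F'(x) = \big(1/\sqrt{2}\big)^{x/\lambda}\bigg[\textit{Poly}'(x) - \dfrac{\big\vert\ln(1/\sqrt{2})\big\vert}{\lambda}\textit{Poly}(x)\bigg].
\end{align}
Since the exponential factor is strictly positive, the hypothesis of the lemma is exactly the statement that $F'(\lambda n) \leq 0$ for every $n \in \mathbb{N}_0$.

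Next I would upgrade this pointwise condition to $F'(x) \leq 0$ on the whole interval $[\lambda n, \lambda(n+1)]$, from which the fundamental theorem of calculus gives $F(\lambda(n+1)) - F(\lambda n) = \int_{\lambda n}^{\lambda(n+1)} F'(x)\, dx \leq 0$, which is exactly the required discrete monotonicity. The upgrade would use the other properties of $\textit{Poly}$ assumed in Def.~\ref{def:adequete} (and exemplified by the canonical linear polynomial $\textit{Poly}(n) = \beta n + \gamma$ with $\gamma \geq 4\beta$ given just after it): $\textit{Poly}$ is a positive, monotonically increasing polynomial whose logarithmic derivative $\textit{Poly}'(x)/\textit{Poly}(x)$ is non-increasing in $x$ (for linear $\textit{Poly}$ this is immediate, since the ratio is $\beta/(\beta x + \gamma)$). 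Hence the bound $\textit{Poly}'(x) \leq (\vert\ln(1/\sqrt{2})\vert/\lambda)\textit{Poly}(x)$ can only become easier to satisfy as $x$ grows past $\lambda n$, so the bracketed factor -- and therefore $F'$ -- remains non-positive throughout the whole interval.

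The main obstacle is precisely this propagation step: converting the hypothesis, which is formulated only at the discrete points $x = \lambda n$, into an inequality on a full interval. If $\textit{Poly}$ were permitted to be an arbitrary polynomial, the hypothesis at a countable set would not in general force the derivative inequality to hold everywhere, so one would need either to strengthen the lemma by demanding the bound at all real $x \geq 0$, or to append an interval-level argument tailored to the structural properties of the admissible $\textit{Poly}$ (positivity, monotonicity, and log-concavity of the allowed polynomial class). For the linear polynomials explicitly singled out in the discussion around Def.~\ref{def:adequete} the monotonicity of $\textit{Poly}'/\textit{Poly}$ is automatic and the argument closes cleanly, which is sufficient for every appeal to this lemma made in the main text.
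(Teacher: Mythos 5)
Your proposal takes the same route as the paper: differentiate the continuous function $\textit{Poly}(\lambda n)\big(1/\sqrt{2}\big)^n$ with respect to $n$ via the product rule, observe that non-positivity of this derivative is exactly the stated hypothesis after dividing by the positive factor $\big(1/\sqrt{2}\big)^n$, and conclude monotone decrease. In fact you are more careful than the paper: its proof simply declares the derivative inequality at the sampled points ``equivalent'' to monotone decrease and stops there, whereas you correctly flag that the hypothesis is only asserted at the discrete points $x = \lambda n$, so an interval-level argument (or an appeal to structural properties of the admissible $\textit{Poly}$, such as the non-increasing logarithmic derivative enjoyed by the linear examples given after Def.~\ref{def:adequete}) is needed to integrate $F'\leq 0$ across $[\lambda n, \lambda(n+1)]$ and obtain the discrete inequality $F(\lambda(n+1))\big(1/\sqrt{2}\big)^{n+1}\leq F(\lambda n)\big(1/\sqrt{2}\big)^n$. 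That propagation step is a genuine gap in the paper's own argument as written (for an arbitrary polynomial satisfying the hypothesis only at countably many points, the derivative could spike between samples), and your proposed repair is the right one; it closes cleanly for the linear class the paper exhibits, though for general polynomials one would want to strengthen the hypothesis of Eqn.~\ref{eqn:PolyRequirement} to hold at all real $x\geq 0$.
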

\begin{proof}
    Consider $\textit{Poly} \big( 2n \log_{\vert \Sigma \vert} \big( \vert \Sigma \vert + 1 \big) \big) \big( 1 / \sqrt{2} \big)^n$ and take the derivative with respect to $n$:
    \begin{align}
        \dfrac{\partial}{ \partial n} \bigg(\textit{Poly} \big( 2n \log_{\vert \Sigma \vert} \big( \vert \Sigma \vert + 1 \big) \big) \big( 1 / \sqrt{2} \big)^n \bigg)
        =&
        2 \log_{\vert \Sigma \vert} \big( \vert \Sigma \vert + 1 \big) \big( 1 / \sqrt{2} \big)^n \dfrac{\partial}{ \partial x} \bigg( \textit{Poly}(x) \bigg) \bigg \vert_{ x = 2n \log_{\vert \Sigma \vert} \big( \vert \Sigma \vert + 1 \big)} \nonumber\\
        &+
        \label{eqn:FirstForPropertyTwo}
        \ln{\big( 1 / \sqrt{2} \big)}\textit{Poly} \big( 2n \log_{\vert \Sigma \vert} \big( \vert \Sigma \vert + 1 \big) \big) \big( 1 / \sqrt{2} \big)^n.
    \end{align}
    For $\textit{Poly} \big( 2n \log_{\vert \Sigma \vert} \big( \vert \Sigma \vert + 1 \big) \big) \big( 1 / \sqrt{2} \big)^n$ to be monotonically decreasing, Eqn.~\ref{eqn:FirstForPropertyTwo} must be less than or equal to zero for all $n \in \mathbb{N}$.
    Rearranging the required inequality, $\textit{Poly} \big( 2n \log_{\vert \Sigma \vert} \big( \vert \Sigma \vert + 1 \big) \big) \big( 1 / \sqrt{2} \big)^n$ being monotonically decreasing is equivalent to:
    \begin{align}
        \label{eqn:FirstForPropertyTwoNumberTwo}
        2 \log_{\vert \Sigma \vert} \big( \vert \Sigma \vert + 1 \big) \big( 1 / \sqrt{2} \big)^n \dfrac{\partial}{ \partial x} \bigg( \textit{Poly}(x) \bigg) \bigg \vert_{ x = 2n \log_{\vert \Sigma \vert} \big( \vert \Sigma \vert + 1 \big) \big)}
        &\leq
        \big \vert \ln{\big( 1 / \sqrt{2} \big)} \big \vert\textit{Poly} \big( 2n \log_{\vert \Sigma \vert} \big( \vert \Sigma \vert + 1 \big) \big) \big( 1 / \sqrt{2} \big)^n\\
        \iff
        \dfrac{\partial}{ \partial x} \bigg( \textit{Poly}(x) \bigg) \bigg \vert_{ x = 2n \log_{\vert \Sigma \vert} \big( \vert \Sigma \vert + 1 \big) \big)}
        &\leq
        \dfrac{\big \vert \ln{\big( 1 / \sqrt{2} \big)} \big \vert}{2 \log_{\vert \Sigma \vert} \big( \vert \Sigma \vert + 1 \big)} \textit{Poly} \big( 2n \log_{\vert \Sigma \vert} \big( \vert \Sigma \vert + 1 \big) \big).
    \end{align}
    Eqn.~\ref{eqn:FirstForPropertyTwoNumberTwo}, by assumption, holds. Therefore, $\textit{Poly} \big( 2n \log_{\vert \Sigma \vert} \big( \vert \Sigma \vert + 1 \big) \big) \big( 1 / \sqrt{2} \big)^n$ is monotonically decreasing.
\end{proof}

\begin{lemma}
    \label{lem:PisoImpliesPhase}
	Any language that is preserving-P-isomorphic to a language that exhibits a phase transition also exhibits a phase transition.
\end{lemma}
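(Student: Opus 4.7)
The plan is to use the preserving-P-isomorphism to pull back the parameter witnessing the phase transition in the target language, and then verify each clause of Def.~\ref{PhaseDef} transfers along the bijection. Concretely, let $\mathcal{L} \subseteq \Sigma^*$ be preserving-P-isomorphic to $\mathcal{H} \subseteq \Pi^*$ via $\xi: \Sigma^* \longrightarrow \Pi^*$, and suppose $\mathcal{H}$ exhibits a phase transition with parameter $\gamma_{\mathcal{H}}: \Pi^* \longrightarrow \mathbb{R}$ (and threshold $\mathcal{T} \in \mathbb{R}$). I would define the candidate parameter for $\mathcal{L}$ by $\gamma_{\mathcal{L}} = \gamma_{\mathcal{H}} \circ \xi$. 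Since $\xi$ is polynomial-time computable (Def.~\ref{def:PIsomorph}) and $\gamma_{\mathcal{H}}$ is polynomial-time computable (Def.~\ref{orderParameterDef}), the composition $\gamma_{\mathcal{L}}$ is polynomial-time computable and therefore qualifies as a parameter in the sense of Def.~\ref{orderParameterDef}.

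The second step is to establish the key bijective correspondence between parameter slices: for every $n \in \mathbb{R}$, $\xi$ restricts to a bijection from $\mathcal{S}^{\gamma_{\mathcal{L}}}_n$ onto $\mathcal{S}^{\gamma_{\mathcal{H}}}_n$. This is immediate from the definition of $\gamma_{\mathcal{L}}$ together with the bijectivity of $\xi$: $x \in \mathcal{S}^{\gamma_{\mathcal{L}}}_n \iff \gamma_{\mathcal{H}}(\xi(x)) = n \iff \xi(x) \in \mathcal{S}^{\gamma_{\mathcal{H}}}_n$. Because $\xi$ is a preserving-P-isomorphism from $\mathcal{L}$ to $\mathcal{H}$ (Def.~\ref{def:preservingPIso}), the same restriction further bijects $\mathcal{S}^{\gamma_{\mathcal{L}}}_n \cap \mathcal{L}$ onto $\mathcal{S}^{\gamma_{\mathcal{H}}}_n \cap \mathcal{H}$. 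Hence, by Def.~\ref{def:acceptingFraction},
\begin{align}
    \mathcal{A}[\mathcal{S}^{\gamma_{\mathcal{L}}}_n]
    &=
    \dfrac{\bigl\vert \mathcal{S}^{\gamma_{\mathcal{L}}}_n \cap \mathcal{L} \bigr\vert}{\bigl\vert \mathcal{S}^{\gamma_{\mathcal{L}}}_n \bigr\vert}
    =
    \dfrac{\bigl\vert \mathcal{S}^{\gamma_{\mathcal{H}}}_n \cap \mathcal{H} \bigr\vert}{\bigl\vert \mathcal{S}^{\gamma_{\mathcal{H}}}_n \bigr\vert}
    =
    \mathcal{A}[\mathcal{S}^{\gamma_{\mathcal{H}}}_n].
\end{align}
From this identity, conditions~1 and 2 of Def.~\ref{PhaseDef} transfer verbatim from $\gamma_{\mathcal{H}}$ to $\gamma_{\mathcal{L}}$: the same sequence of lower and upper bounds that witness the limits for $\mathcal{H}$ witness them for $\mathcal{L}$, with the same monotonicity.

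For condition~3, I would use the same bijection at the level of whole alphabets. Because $\xi: \Sigma^* \longrightarrow \Pi^*$ is a bijection satisfying $\gamma_{\mathcal{L}} = \gamma_{\mathcal{H}} \circ \xi$, for any $A \in \mathbb{R}^+$ and $\delta \in \mathbb{R}^+$ it restricts to a bijection between $\{x \in \Sigma^* \mid A \leq \gamma_{\mathcal{L}}(x) \leq A + \delta\}$ and $\{y \in \Pi^* \mid A \leq \gamma_{\mathcal{H}}(y) \leq A + \delta\}$, and analogously at the level of all of $\Sigma^*$ and $\Pi^*$. Consequently the cardinality (and hence the fraction) of $\Sigma^*$ taking $\gamma_{\mathcal{L}}$-value in $[A, A+\delta]$ equals the corresponding quantity for $\gamma_{\mathcal{H}}$, so the exponential growth in $|A - \mathcal{T}|$ guaranteed by condition~3 for $\gamma_{\mathcal{H}}$ carries over to $\gamma_{\mathcal{L}}$ with the same threshold value $\mathcal{T}$.

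The main obstacle is the careful handling of condition~3, which references fractions of the infinite sets $\Sigma^*$ and $\Pi^*$; the proof really needs the fact that $\xi$ is a \emph{single} bijection that commutes with the parameter, so that the cardinal (or natural density) quantities appearing in condition~3 for the two languages are literally equal rather than merely comparable. Everything else is bookkeeping: once the parameter slice bijection is set up, the three clauses of Def.~\ref{PhaseDef} for $\mathcal{L}$ reduce, one-for-one, to the corresponding clauses holding for $\mathcal{H}$, completing the proof.
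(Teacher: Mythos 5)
Your proposal is correct and follows essentially the same route as the paper: pull the parameter back through the isomorphism ($\gamma_{\mathcal{L}} = \gamma_{\mathcal{H}} \circ \xi$, which is the type-correct order; the paper writes $\xi \circ \gamma$ but means the same composition), show that $\xi$ bijects each parameter slice of $\Sigma^*$ onto the corresponding slice of $\Pi^*$ while preserving language membership, conclude the accepting fractions coincide so conditions~1 and~2 transfer, and use the same cardinality-preserving bijection on the sets $\{x \mid \gamma(x) \in [A, A+\delta]\}$ for condition~3. Your explicit check that the composed parameter is polynomial-time computable is a small point the paper leaves implicit, but the argument is otherwise identical.
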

\begin{proof}
Let $\Sigma$ and $\Pi$ be alphabets.
	Additionally, let $\mathcal{L} \subset \Sigma^*$ be a language and $\xi: \Sigma^* \longrightarrow \Pi^*$ be a preserving-P-isomorphism from $\mathcal{L}$ to $\mathcal{H}$, where $\mathcal{H} \subset \Pi^*$ is a language that exhibits a phase transition.
    
	Suppose $\gamma: \Pi^* \longrightarrow \mathbb{R}$ is the parameter that induces the phase transition in $\mathcal{H}$. I then define another parameter, $\gamma': \Sigma^* \longrightarrow \mathbb{R}$, intended to induce a phase transition in $\mathcal{L}$ by: $\forall x \in \Sigma^*$,
	\begin{align}
		\gamma'(x)
		=
		\big( \xi \circ \gamma \big)(x).
	\end{align}
	To show that $\gamma': \Sigma^* \longrightarrow \mathbb{R}$ induces a phase transition in $\mathcal{L}$, I consider each condition (abbreviated $\bold{Cond.}$ below) of phase transitions (as defined in Def.~\ref{PhaseDef}), in turn, and show they are met:\\
    
\begin{enumerate}
	\item[\underline{\textbf{Cond. 1:}}] Define $\Tilde{S}^{\gamma}_n = \big \{ y \in \Pi^* \text{ } \vert \text{ } \gamma(y) = n \big \} \subseteq \Pi^*$ (where the tilde is used to denote that $\Tilde{S}^{\gamma}_n$ is a subset of $\Pi^*$) and consider:
	\begin{align}
		S^{\gamma'}_n &= S^{\xi \circ \gamma}_n = \big \{ x \in \Sigma^* \text{ } \vert \text{ } ( \xi \circ \gamma )(x) = n \big \} \nonumber \\
        \label{eqn:equivOfSgammaStuff}
		\Rightarrow
		\xi \big( S^{\gamma'}_n \big) 
		&=
		 \big \{ \xi(x) \in \Pi^* \text{ } \vert \text{ } x \in \Sigma^* \text{ and } ( \xi \circ \gamma )(x) = n \big \}
		=
		 \big \{ y \in \Pi^* \text{ } \vert \text{ } ( \xi \circ \gamma )(\xi^{-1}(y)) = n \big \}
		 =
		 \big \{ y \in \Pi^* \text{ } \vert \text{ } \gamma(y) = n \big \}
		 =
		 \Tilde{S}^{\gamma}_n.
	\end{align}
	Switching focus, let $\mathcal{A}_{\mathcal{H}} \big[ \Tilde{S} \big]$ be the fraction of $\Tilde{S} \subseteq \Pi^*$ that is in $\mathcal{H}$; known as the acceptance fraction of $\mathcal{H}$ in $\Tilde{S}$. Similarly, let $\mathcal{A}_{\mathcal{L}} \big[ S \big]$ be the fraction of $S \subseteq \Sigma^*$ that is in $\mathcal{L}$. As, by assumption, $\mathcal{H}$ exhibits a phase transition:
	\begin{align}
        \label{eqn:limitingOfALp}
		\lim_{n \longrightarrow \infty} \bigg( \mathcal{A}_{\mathcal{H}} \big[ \Tilde{S}^{\gamma}_n \big] \bigg) 
		=
		1.
	\end{align}
	As $\xi$ is a preserving-P-isomorphism, $\xi$ is bijective and, $\forall x \in \Sigma^*$, $x \in \mathcal{L} \iff \xi(x) \in \mathcal{H}$. Therefore, also using that $\xi \big( S^{\gamma'}_n \big) = \Tilde{S}^{\gamma}_n$ (as shown in Eqn.~\ref{eqn:equivOfSgammaStuff}), the equation for $\mathcal{A}_{\mathcal{L}} \big[ S^{\gamma'}_n \big]$ (Eqn.~\ref{eqn:indef:acceptingFraction}) in Def.~\ref{def:acceptingFraction} implies that: $\forall n \in \mathbb{N}_0$,
	\begin{align}
        \label{eqn:equatingAs}
		\mathcal{A}_{\mathcal{L}} \big[ S^{\gamma'}_n \big]
		=
        \dfrac{\big \vert \mathcal{L} \cap S^{\gamma'}_n \big \vert}{\big \vert S^{\gamma'}_n \big \vert}
        =
        \dfrac{\big \vert \xi \big( \mathcal{L} \cap S^{\gamma'}_n \big) \big \vert}{\big \vert \xi \big( S^{\gamma'}_n \big) \big \vert}
        =
        \dfrac{\big \vert \xi \big( \mathcal{L} \big) \cap \xi \big( S^{\gamma'}_n \big) \big \vert}{\big \vert \xi \big( S^{\gamma'}_n \big) \big \vert}
        =
        \dfrac{\big \vert \mathcal{H} \cap \Tilde{S}^{\gamma}_n \big \vert}{\big \vert \Tilde{S}^{\gamma}_n \big \vert}
		=
		\mathcal{A}_{\mathcal{H}} \big[ \Tilde{S}^{\gamma}_n \big].
	\end{align}
	Hence, combining Eqn.~\ref{eqn:limitingOfALp} and Eqn.~\ref{eqn:equatingAs} gives the required $\mathbf{Cond~1}$ for $\mathcal{L}$:
	\begin{align}
		\lim_{n \longrightarrow \infty} \bigg( \mathcal{A}_{\mathcal{L}} \big[ S^{\gamma'}_n \big] \bigg)
		=
		\lim_{n \longrightarrow \infty} \bigg( \mathcal{A}_{\mathcal{H}} \big[ \Tilde{S}^{\gamma}_n \big] \bigg) 
		=
		1.
	\end{align}		  	 
	
	\item[\underline{\textbf{Cond. 2:}}] Using similar reasoning to above, as, by assumption, $\mathcal{H}$ exhibits a phase transition and $\xi$ is a preserving-P-isomorphism:
	\begin{align}
		\lim_{n \longrightarrow -\infty} \bigg( \mathcal{A}_{\mathcal{L}} \big[ S^{\gamma'}_n \big] \bigg)
		=
		\lim_{n \longrightarrow -\infty} \bigg( \mathcal{A}_{\mathcal{H}} \big[ \Tilde{S}^{\gamma}_n \big] \bigg) 
		=
		0.
	\end{align}
	
	\item[\underline{\textbf{Cond. 3:}}] As $\xi: \Sigma^* \longrightarrow \Pi^*$ is a preserving-P-isomorphism, it is a bijection, therefore: $\forall S \subseteq \Sigma^*$,
	\begin{align}
		\big \vert \xi (S) \big \vert 
		=
		\big \vert S \big \vert.
	\end{align}
	The next step first requires a brief diversion to define a useful subset, in Def.~\ref{def:SinRange}.
    \begin{definition}
        \label{def:SinRange}
    For a fixed $\delta \in \mathbb{R}$, $\forall A \in \mathbb{R}$, define $\underline{\Tilde{S}^{\gamma}_{A, A+\delta}}  \subset \Pi^*$ by:
	\begin{align}
		\Tilde{S}^{\gamma}_{A, A+\delta}
		=
		\big \{ x \in \Pi^* \text{ }\vert \text{ } \gamma(x) \in [A, A + \delta] \big \},
	\end{align}
    and similarly define $S^{\gamma'}_{A, A + \delta} = \big \{ y \in \Sigma^* \text{ } \vert \text{ } \gamma (\xi(y)) \in [A, A + \delta ] \big \}$.
        \end{definition}
        Resuming the main thread of this proof, and using Def.~\ref{def:SinRange}:
	\begin{align}
        \label{eqn:xi-1SThing}
		 \xi^{-1} \big( \Tilde{S}^{\gamma}_{A, A + \delta} \big)
		 &=
		 \big \{ \xi^{-1}(x) \in \Sigma^* \text{ } \vert \text{ } x \in \Pi^* \text{ and } \gamma(x) \in [A, A + 			\delta ] \big \}
		 =
		 \big \{ y \in \Sigma^* \text{ } \vert \text{ } \gamma (\xi(y)) \in [A, A + \delta ] \big \}
		 =
		 S^{\gamma'}_{A, A + \delta}
         \subseteq
         \Sigma^*.
	\end{align}
	As $\mathcal{H}$, by assumption, exhibits a phase transition, $\vert \Tilde{S}^{\gamma}_{A, A+\delta} \vert $ decays sufficiently quickly (i.e. exponentially) as $A$ approaches its own, corresponding, threshold to meet the third required condition of phase transitions. Therefore, using that $\xi^{-1}: \Pi^* \longrightarrow \Sigma^*$ is a P-isomorphism (as shown in Lemma~\ref{Lem:inverseIsPreserve}, in Appendix~\ref{sec:PIsoDetailsAppendix}), and therefore a bijection, Eqn.~\ref{eqn:xi-1SThing} implies:
	\begin{align}
		\big \vert \Tilde{S}^{\gamma}_{A, A+\delta} \big \vert
		=
		\big \vert \xi^{-1} \big( \Tilde{S}^{\gamma}_{A, A+\delta} \big) \big \vert
		=
		\big \vert S^{\gamma'}_{A, A + \delta} \big \vert.
	\end{align}
	I then conclude that $\big \vert S^{\gamma'}_{A, A + \delta} \big \vert$ must also decay sufficiently quickly, as $A$ approaches the threshold, and so meets the third required condition of phase transitions in Def.~\ref{PhaseDef}. 
\end{enumerate}
	I have therefore shown for each requirement of phase transitions (in Def.~\ref{PhaseDef}) that if $\mathcal{H}$ meets it then $\mathcal{L}$ also meets it by virtue of the preserving-P-isomorphism, $\xi: \Sigma^* \longrightarrow \Pi^*$, between them. So, if $\mathcal{H} \subseteq \Pi^*$ has a phase transition, so does $\mathcal{L} \subseteq \Sigma^*$.
\end{proof}

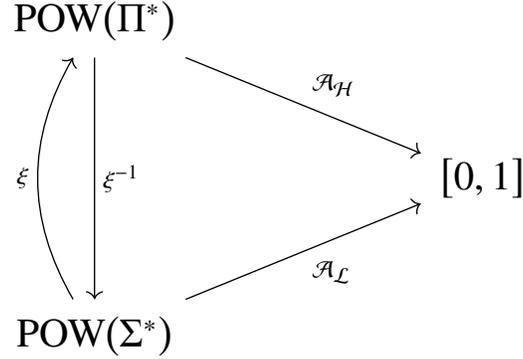
\begin{figure}[h!]
    \centering

    \begin{tikzcd}[scale cd=1.9, row sep=huge,column sep=huge]
\text{POW} \big( \Pi^* \big) \arrow[dd, "\xi^{-1}"] \arrow[rrd, "\mathcal{A}_{\mathcal{H}}"]           &  &                     \\
                                                                                                       &  & {\big [ 0, 1 \big]} \\
\text{POW} \big( \Sigma^* \big) \arrow[uu, "\xi", bend left] \arrow[rru, "\mathcal{A}_{\mathcal{L}}"'] &  &                    
\end{tikzcd}
    \caption{Diagram of the situation in Lemma~\ref{lem:PisoImpliesPhase}. Note that $\text{POW}$ denotes the powerset of its argument and that this diagram is commutative.}
    \label{fig:Lemma5Diagram}
\end{figure}

\section{The Incompatibility of Padding and Sparsity}
\label{app:PaddingSparsity}
The purpose of this appendix is to examine the relationship between sparsity and paddability, showing that they are incompatible and hence arguing -- informally -- that being paddable and not-anywhere-exponentially-unbalanced are commonly found together.

Before beginning this appendix, I first must define a particular subset of $\Sigma^*$, in Def.~\ref{def:SigmaN}.
\begin{definition}
    \label{def:SigmaN}
    $\forall n \in \mathbb{N}_0$, define $\Sigma^n \subseteq \Sigma^*$ by: $\Sigma^n
        =
        \big \{ x \in \Sigma^* \text{ } \vert \text{ } \vert x \vert = n \big \}$.
    Similarly, I define, $\Sigma^{\leq n} \subseteq \Sigma^*$ by:
    \begin{align}
        \Sigma^{\leq n}
        &=
        \bigcup_{j = 0}^n \bigg( \Sigma^j \bigg). 
    \end{align}
    
\end{definition}
Then I define the density of a language (in Def.~\ref{def:density}), which is the basis for sparsity (defined in Def.~\ref{def:sparseLanguages}).
\begin{definition}
    \label{def:density}
    For any alphabet, $\Sigma$, and any language, $\mathcal{L} \subseteq \Sigma^*$, the density of $\mathcal{L}$ in $ \Sigma^n$, $\textit{Dens}_{\mathcal{L}}: \mathbb{N}_0 \longrightarrow \mathbb{N}_0$, is defined by: $\forall n \in \mathbb{N}_0$,
    \begin{align}
        \textit{Dens}_{\mathcal{L}} \big( n \big) = \big \vert \mathcal{L} \cap \Sigma^{\leq n} \big \vert.
    \end{align}
\end{definition}
\begin{definition}
    \label{def:sparseLanguages}
    A language, $\mathcal{L} \subseteq \Sigma^*$, is sparse if and only if there exists a polynomial, $\textit{Poly}_{s}: \mathbb{N} \longrightarrow \mathbb{N}$, such that: $\forall n \in \mathbb{N}$,
    \begin{align}
        \textit{Dens}_{\mathcal{L}} \big( n \big) 
        \leq
        \textit{Poly}_{s}(n).
    \end{align}
\end{definition}
With the foundational concepts defined, I move to the crux of Appendix~\ref{app:PaddingSparsity}: a prohibition on languages being both sparse and paddable.
\begin{lemma}
    \label{lem:padSparseisNO}
    No paddable language (over an alphabet of size at least two) is sparse.
\end{lemma}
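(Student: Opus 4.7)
The plan is to argue by direct contradiction: if $\mathcal{L} \subseteq \Sigma^*$ is paddable (and nonempty, which I would either assume or include in the statement to exclude the degenerate $\mathcal{L} = \emptyset$ case, for which the paddability conditions of Def.~\ref{padDef} are vacuous), then $\mathcal{L}$ must contain a super-polynomial number of strings of length at most $m$ for large enough $m$, violating the polynomial bound in Def.~\ref{def:sparseLanguages}. The whole argument is essentially a pigeonhole / counting argument exploiting the injectivity of the padding map together with its polynomial running time.

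First I would fix any $x_0 \in \mathcal{L}$ and consider the map $\Sigma^* \longrightarrow \mathcal{L}$ given by $y \mapsto \textit{Pad}(x_0, y)$, where $\textit{Pad}$ and $\textit{Dec}$ are the polynomial-time functions witnessing paddability as in Def.~\ref{padDef}. This map does land inside $\mathcal{L}$ because $\textit{Pad}(x_0,y) \in \mathcal{L} \iff x_0 \in \mathcal{L}$ and $x_0$ was chosen in $\mathcal{L}$; it is injective because $\textit{Dec}(\textit{Pad}(x_0, y_1)) = y_1$ and $\textit{Dec}(\textit{Pad}(x_0, y_2)) = y_2$, so $\textit{Pad}(x_0, y_1) = \textit{Pad}(x_0, y_2)$ forces $y_1 = y_2$.

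Next I would use the fact that $\textit{Pad}$ runs in polynomial time, so there is a polynomial $p: \mathbb{N}_0 \longrightarrow \mathbb{N}_0$ with $|\textit{Pad}(x_0, y)| \leq p(|x_0| + |y|)$ for every $y \in \Sigma^*$. Restricting the injection above to the set $\Sigma^{\leq n}$, whose size is at least $|\Sigma|^n \geq 2^n$ since $|\Sigma| \geq 2$, the images land inside $\mathcal{L} \cap \Sigma^{\leq p(|x_0| + n)}$. Therefore, using Def.~\ref{def:density},
\[
    \textit{Dens}_{\mathcal{L}}\big( p(|x_0| + n) \big) = \big\vert \mathcal{L} \cap \Sigma^{\leq p(|x_0| + n)} \big\vert \geq 2^n.
\]
Setting $m = p(|x_0| + n)$ and inverting (writing $\deg p = d$, so $n \geq \left(\frac{m}{c}\right)^{1/d} - |x_0|$ for suitable constants $c, d$ depending on $p$ once $n$ is sufficiently large), I would obtain
\[
    \textit{Dens}_{\mathcal{L}}(m) \;\geq\; 2^{\left(\frac{m}{c}\right)^{1/d} - |x_0|},
\]
which grows faster than any polynomial in $m$.

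Finally, I would contrast this lower bound with Def.~\ref{def:sparseLanguages}, which would demand $\textit{Dens}_{\mathcal{L}}(m) \leq \textit{Poly}_s(m)$ for some fixed polynomial $\textit{Poly}_s$, obtaining an immediate contradiction for all sufficiently large $m$. There is no real obstacle: the only slightly delicate point is the edge case $\mathcal{L} = \emptyset$, which is vacuously sparse and trivially satisfies Def.~\ref{padDef}; I would handle this either by stating the lemma for nonempty $\mathcal{L}$ or by noting that the paddability conditions become non-restrictive and so do not correspond to the intended notion of a paddable language.
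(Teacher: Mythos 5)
Your proposal is correct and follows essentially the same route as the paper: fix $x_0 \in \mathcal{L}$, use injectivity of $y \mapsto \textit{Pad}(x_0,y)$ (via $\textit{Dec}$) to inject the exponentially large set $\Sigma^{\leq n}$ into $\mathcal{L} \cap \Sigma^{\leq p(|x_0|+n)}$, and contradict the polynomial density bound. Your remark about the degenerate case $\mathcal{L} = \emptyset$ (which is vacuously sparse yet satisfies Def.~\ref{padDef}) is a fair catch that the paper's proof also implicitly glosses over when it fixes $x \in \mathcal{L}$.
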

\begin{proof}(Based on Ref.~\cite{sparseAndPaddaple})\\
    Assume, for the sake of contradiction, $\mathcal{L} \subseteq \Sigma^*$ is a language (where $\vert \Sigma \vert \geq 2$) that is sparse (with the polynomial $\textit{Poly}_{s}: \mathbb{N}_0 \longrightarrow \mathbb{R}$ bounding the density of $\mathcal{L}$) and paddable (with padding function, $\textit{Pad}_{\mathcal{L}}: \Sigma^* \times \Sigma^* \longrightarrow \Sigma^*$).

    As, by definition, $\textit{Pad}_{\mathcal{L}}: \Sigma^* \times \Sigma^* \longrightarrow \Sigma^*$ runs in polynomial (in the size of the combined inputs) time, there exists a polynomial, $\textit{Poly}_{p}: \mathbb{N} \longrightarrow \mathbb{N}$, such that: $\forall x, y \in \Sigma^*$,
    \begin{align}
        \big \vert \textit{Pad}_{\mathcal{L}} (x,y) \big \vert \leq \textit{Poly}_{p} \big( \vert x \vert + \vert y \vert \big).
    \end{align}
    Then, for any fixed $x \in \mathcal{L}$, as $\textit{Pad}_{\mathcal{L}}$ by definition preserves membership of $\mathcal{L}$, $\mathcal{L}$ being sparse implies -- using Def.~\ref{def:sparseLanguages} -- that there exists a polynomial, $\textit{Poly}_{s}: \mathbb{N}_0 \rightarrow \mathbb{N}_0$, such that: $\forall n \in \mathbb{N}$,
    \begin{align}
        \label{eqn:sparsityBounds}
        \big \vert \big \{ \textit{Pad}_{\mathcal{L}} (x,y) \in \mathcal{L} \text{ } \vert \text{ } y \in \Sigma^{\leq n} \big \} \big \vert 
        \leq
        \sum_{j = 0}^{n} \bigg( \textit{Poly}_{s} \big( \textit{Poly}_{p} \big( \vert x \vert + j \big) \big) \bigg).
    \end{align}
     However, as $\textit{Pad}_{\mathcal{L}}: \Sigma^* \longrightarrow \Sigma^*$ is bijective (in its second input, as shown in Lemma~\ref{def:paddingBijective} in Appendix~\ref{App:paddIso}) and
     \begin{align}
         \big \vert \Sigma^{\leq n} \big \vert = \sum_{j = 0}^n \bigg( \big \vert \Sigma \big \vert^j \bigg) = \dfrac{ \vert \Sigma \vert^{n + 1} - 1 }{ \vert \Sigma \vert - 1 },
    \end{align}
    it follows that: $\forall n \in \mathbb{N}$,
     \begin{align}
        \label{Eqn:sizeOfSet}
         \big \vert \big \{ \textit{Pad}_{\mathcal{L}} (x,y) \in \Sigma^* \text{ } \vert \text{ } y \in \Sigma^{\leq n} \big \} \big \vert
         &=
         \dfrac{\vert \Sigma \vert^{n + 1} - 1}{\vert \Sigma \vert - 1}.
     \end{align}
     For convenience, I then define another polynomial, $\textit{Poly}_{sp}: \mathbb{N} \longrightarrow \mathbb{N}$, by: $\forall n \in \mathbb{N}$,
    \begin{align}
        \label{eqn:newPoly}
        \textit{Poly}_{sp}(n)
        &=
        \sum_{j = 0}^{n} \bigg( \textit{Poly}_{s} \big( \textit{Poly}_{p} \big( \vert x \vert + j \big) \big) \bigg).
    \end{align}
     Therefore, combining Eqn.~\ref{eqn:sparsityBounds}, Eqn.~\ref{Eqn:sizeOfSet}, and  Eqn.~\ref{eqn:newPoly} gives: $\forall n \in \mathbb{N}$,
     \begin{align}
         \label{eqn:contraDictStatement}
         \dfrac{\vert \Sigma \vert^{n + 1} - 1}{\vert \Sigma \vert - 1}
         \leq
         \textit{Poly}_{sp}(n).
     \end{align}
     As I assumed $\vert \Sigma \vert \geq 2$, there will always exist a sufficiently large $n \in \mathbb{N}$ to render Eqn.~\ref{eqn:contraDictStatement} false. Hence, the assertion in Eqn.~\ref{eqn:contraDictStatement} is a contradiction and $\mathcal{L} \subseteq \Sigma^*$ cannot exist as described. I.e. no language can be both paddable and sparse.
\end{proof}

\section{P-isomorphism Auxiliary Lemma}
\label{sec:PIsoDetailsAppendix}
The below Lemma~\ref{Lem:inverseIsPreserve} is used in the proof of Lemma~\ref{lem:PisoImpliesPhase}, to show that the third condition of phase transitions are met.
\begin{lemma}
    \label{Lem:inverseIsPreserve}
    For any preserving-P-isomorphism, $\xi: \Sigma^* \longrightarrow \Pi^*$, its inverse, $\xi^{-1}: \Pi^* \longrightarrow \Sigma^*$, is also a preserving-P-isomorphism.
\end{lemma}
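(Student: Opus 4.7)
The plan is to unpack the definitions of P-isomorphism (Def.~\ref{def:PIsomorph}) and preserving-P-isomorphism (Def.~\ref{def:preservingPIso}) and verify each clause directly for $\xi^{-1}: \Pi^* \longrightarrow \Sigma^*$, using the corresponding clause for $\xi: \Sigma^* \longrightarrow \Pi^*$. Since $\xi$ is, by assumption, a preserving-P-isomorphism from some $\mathcal{L} \subseteq \Sigma^*$ to some $\mathcal{H} \subseteq \Pi^*$, the plan is to show $\xi^{-1}$ is a preserving-P-isomorphism from $\mathcal{H}$ to $\mathcal{L}$.

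First I would verify the three P-isomorphism requirements on $\xi^{-1}$. Polynomial-time computability of $\xi^{-1}$ is immediate, since the definition of a P-isomorphism requires $\xi$ to be invertible in polynomial time and $\xi^{-1}$ \emph{is} that inverse. Bijectivity of $\xi^{-1}$ follows from the standard fact that the inverse of a bijection is a bijection; symbolically, $\xi \circ \xi^{-1} = \mathrm{id}_{\Pi^*}$ and $\xi^{-1} \circ \xi = \mathrm{id}_{\Sigma^*}$ together certify that $\xi$ serves as the two-sided inverse of $\xi^{-1}$. Polynomial-time invertibility of $\xi^{-1}$ then reduces to polynomial-time computability of $\xi$, which holds by assumption. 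Hence $\xi^{-1}$ is a P-isomorphism in the sense of Def.~\ref{def:PIsomorph}.

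Next I would verify the preserving property. Given any $y \in \Pi^*$, set $x = \xi^{-1}(y) \in \Sigma^*$, so that $\xi(x) = y$ by the two-sided inverse relations above. Applying the preserving property of $\xi$ (Eqn.~\ref{eqn:PreservingAspect}) to this $x$ gives
\begin{align}
    \xi^{-1}(y) \in \mathcal{L}
    \;\iff\;
    x \in \mathcal{L}
    \;\iff\;
    \xi(x) \in \mathcal{H}
    \;\iff\;
    y \in \mathcal{H},
\end{align}
which is exactly the preserving condition for $\xi^{-1}$ from $\mathcal{H}$ to $\mathcal{L}$. Combining this with the P-isomorphism verification completes the proof.

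There is no real obstacle: the argument is a pure definition-chase, since each of the three P-isomorphism conditions is manifestly symmetric in $\xi$ and $\xi^{-1}$, and the preserving property transfers by a single biconditional substitution. The only mild care needed is to make the change of variable $x = \xi^{-1}(y)$ explicit so that one applies the hypothesis on $\xi$ at an argument it actually ranges over; bijectivity guarantees that as $y$ ranges over $\Pi^*$, the corresponding $x$ ranges over all of $\Sigma^*$, so no case is missed.
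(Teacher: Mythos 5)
Your proposal is correct and follows essentially the same route as the paper: establish that $\xi^{-1}$ is a P-isomorphism (the paper asserts this directly from Def.~\ref{def:PIsomorph}, where you spell out each clause), then transfer the preserving property via the same chain of biconditionals with the substitution $x = \xi^{-1}(y)$. No gaps.
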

\begin{proof}
    A preserving-P-isomorpism, as defined in Def.~\ref{def:preservingPIso}, is also a P-isomorphism.
    
    By definition (in Def.~\ref{def:PIsomorph}), if $\xi: \Sigma^* \longrightarrow \Pi^*$ is an P-isomorphism, so is its inverse, $\xi^{-1}: \Pi^* \longrightarrow \Sigma^*$. Then, using that for any $x \in \Sigma^*$ there exists a $y \in \Pi^*$ such that $y = \xi (x)$ (as $\xi$ is a bijection), alongside Eqn.~\ref{eqn:PreservingAspect} (in Def.~\ref{def:preservingPIso}): $\forall y \in \Pi^*$,
    \begin{align}
        \label{eqn:inversePreservingAspect}
        \xi^{-1}(y) \in \mathcal{L} \iff  x \in \mathcal{L} \iff \xi(x) \in \mathcal{H} \iff y \in \mathcal{H}.
    \end{align}
    So, as Eqn.~\ref{eqn:inversePreservingAspect} is the equivalent of Eqn.~\ref{eqn:PreservingAspect} but for $\xi^{-1}$, I conclude that $\xi^{-1}: \Pi^* \longrightarrow \Sigma^*$ is a preserving-P-isomorphism.
\end{proof}
\section{Padding Function Isomorphism Auxiliary Lemma}
\label{App:paddIso}
The aim of this appendix is to provide the statement and proof of the below Lemma~\ref{def:paddingBijective}, which is used in the proof of Lemma~\ref{lem:PisoImpliesPhase}, in Appendix~\ref{app:LemsForMainLemma} which in turn contributes to the proof of Theorem~\ref{lem:MainLemma}. 
\begin{lemma}
    \label{def:paddingBijective}
    For any alphabet, $\Sigma$, and paddable language, $\mathcal{L} \subset \Sigma^*$ the padding function of $\mathcal{L}$, $\textit{Pad}_{\mathcal{L}}: \Sigma^* \times \Sigma^* \longrightarrow \Sigma^*$, is bijective in its second argument. I.e. $\forall x \in \Sigma^*$, the function, $\textit{Pa}_{x}: \Sigma^* \longrightarrow \Sigma^*$, defined by: $\forall y \in \Sigma^*$,
    \begin{align}
        \textit{Pa}_{x}(y)
        &=
        \textit{Pad}_{\mathcal{L}} (x, y),
    \end{align}
    where $x \in \Sigma^*$ is fixed, is bijective.
\end{lemma}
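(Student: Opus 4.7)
The plan is to prove bijectivity of $\textit{Pa}_{x}: \Sigma^* \to \Sigma^*$ by separately establishing injectivity and surjectivity onto all of $\Sigma^*$, working from the two axioms of Def.~\ref{padDef}.

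Injectivity is routine. Given $y_1, y_2 \in \Sigma^*$ with $\textit{Pa}_{x}(y_1) = \textit{Pa}_{x}(y_2)$, I would apply $\textit{Dec}_{\mathcal{L}}$ to both sides and invoke the second paddability axiom $\textit{Dec}_{\mathcal{L}}(\textit{Pad}_{\mathcal{L}}(x, y)) = y$ to collapse each side to the respective $y_i$, forcing $y_1 = y_2$. This should be a two-line argument.

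Surjectivity is the substantive step and the main obstacle. I need to show that for every $z \in \Sigma^*$ there exists some $y \in \Sigma^*$ with $\textit{Pa}_{x}(y) = z$. The natural candidate $y := \textit{Dec}_{\mathcal{L}}(z)$ reduces the task to establishing the identity $\textit{Pad}_{\mathcal{L}}(x, \textit{Dec}_{\mathcal{L}}(z)) = z$ for every $z$. My plan is to exploit the fact that the composite $\textit{Pa}_{x} \circ \textit{Dec}_{\mathcal{L}}: \Sigma^* \to \Sigma^*$ is necessarily idempotent, because $\textit{Dec}_{\mathcal{L}} \circ \textit{Pa}_{x} = \mathrm{id}_{\Sigma^*}$ immediately gives $(\textit{Pa}_{x} \circ \textit{Dec}_{\mathcal{L}})^{2} = \textit{Pa}_{x} \circ \textit{Dec}_{\mathcal{L}}$, and an idempotent self-map on a set equals the identity precisely when its image is the whole set. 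So the entire surjectivity claim reduces to showing $\mathrm{image}(\textit{Pa}_{x}) = \Sigma^{*}$.

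To close that last reduction, my fallback plan is a canonicalisation step: replace the given $\textit{Pad}_{\mathcal{L}}$ with a refined padding function $\textit{Pad}'_{\mathcal{L}}$ obtained by composing with the polynomial-time bijection $\alpha_{\Sigma} \circ \theta_{\Sigma}$ from Def.~\ref{def:OmegaSigma} and Def.~\ref{def:alphaPi} on the second argument, arranged so that each $z \in \Sigma^{*}$ is hit exactly once and the correspondingly adjusted $\textit{Dec}'_{\mathcal{L}}$ still satisfies both paddability axioms. I expect the hardest point to be verifying that this re-indexing simultaneously preserves the first paddability axiom $\textit{Pad}'_{\mathcal{L}}(x, y) \in \mathcal{L} \iff x \in \mathcal{L}$, preserves the decoding axiom for the matched $\textit{Dec}'_{\mathcal{L}}$, and remains polynomial-time — this is the genuine technical gap, since the bare axioms of Def.~\ref{padDef} do not a priori rule out padding functions whose image is a proper subset of $\Sigma^{*}$, so forcing the image to be all of $\Sigma^{*}$ is where the real content of the bijectivity claim lives.
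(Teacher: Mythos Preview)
Your injectivity argument is exactly what the paper does: apply $\textit{Dec}_{\mathcal{L}}$ and invoke the decoding axiom to recover $y$. The paper's entire proof is just this left-inverse observation --- it exhibits $\textit{Dec}_{\mathcal{L}}$ as a left inverse of $\textit{Pa}_{x}$ and then declares the map a bijection. In other words, the paper never actually argues surjectivity; it treats ``has a left inverse'' as synonymous with ``bijective.''

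Your instinct that surjectivity is the real obstruction is correct, and in fact your canonicalisation plan cannot succeed, because the surjectivity claim is \emph{false} as stated. Fix any $x \in \mathcal{L}$ with $\mathcal{L} \subsetneq \Sigma^{*}$. The first paddability axiom forces $\textit{Pa}_{x}(y) = \textit{Pad}_{\mathcal{L}}(x,y) \in \mathcal{L}$ for every $y$, so $\mathrm{image}(\textit{Pa}_{x}) \subseteq \mathcal{L} \subsetneq \Sigma^{*}$. Symmetrically, for $x \notin \mathcal{L}$ the image lies in $\mathcal{L}^{c}$. Hence $\textit{Pa}_{x}$ is never surjective onto $\Sigma^{*}$ for a nontrivial language, and no re-indexing of the second argument can change that without violating the membership-preservation axiom. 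The ``genuine technical gap'' you identified is not a gap in your argument but a gap in the lemma's statement.

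This does not damage the paper: the only place Lemma~\ref{def:paddingBijective} is invoked is in the proof of Lemma~\ref{lem:padSparseisNO}, where what is actually used is the cardinality identity $\big\vert \{ \textit{Pad}_{\mathcal{L}}(x,y) : y \in \Sigma^{\leq n} \} \big\vert = \big\vert \Sigma^{\leq n} \big\vert$, and that follows from injectivity alone. So your two-line injectivity argument is both correct and sufficient for everything downstream; the right fix is to read ``bijective'' as ``injective'' and drop the surjectivity discussion entirely.
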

\begin{proof}
    It is easy to see that, $\forall y \in \Sigma^*$, $\textit{Pa}_{x}(y)$ only returns a single value. It then only remains to show that $y$ can be uniquely recovered from $\textit{Pa}_{x}(y)$, which would then imply each input to $\textit{Pa}_{x}(y)$ is the only input that gives the output it gives and hence $\textit{Pa}_{x}(y)$ is a bijection.

    To show that $y$ can be uniquely recovered, I propose the following mapping as a left inverse of $\textit{Pa}_{x}(y)$: $\forall z \in \Sigma^*$,
    \begin{align}
        \textit{Pa}^{-1}_{x}(z)
        &=
        \textit{Dec}_{\mathcal{L}}(z),
    \end{align}
    where $\textit{Dec}_{\mathcal{L}}: \Sigma^* \longrightarrow \Sigma^*$ is the decoding function of $\mathcal{L}$, which must exist as it is paddable. This can be seen to correctly and uniquely recover $y$ as:
    \begin{align}
        \textit{Pa}^{-1}_{x}(\textit{Pa}_{x}(y))
        &=
        \textit{Dec}_{\mathcal{L}}( \textit{Pad}_{\mathcal{L}} (x, y))
        =
        y.
    \end{align}
\end{proof}


\end{document}